 \DeclareMathOperator{\E}{E}
\newcommand{\argmin}{\mathrm{argmin}}
\newcommand{\bGamma}{\boldsymbol{\Gamma}}
\theoremstyle{plain}
\newtheorem{thm}{Theorem}[section]
\newtheorem{cor}[thm]{Corollary}
\newtheorem{lem}[thm]{Lemma}
\newtheorem{prop}[thm]{Property}
\theoremstyle{definition}
\theoremstyle{remark}
 \ifpdf\setlength{\pdfpagewidth}{8.5in}\setlength{\pdfpageheight}{11in}\fi
\numberwithin{equation}{section}
 \DeclareMathOperator{\glex}{\prec}
 \DeclareMathOperator{\geqlex}{\preceq}
 \DeclareMathOperator{\llex}{\succ}
\let\olditemize=\itemize
\def\itemize{
\olditemize
\setlength{\itemsep}{-1ex}
}
\let\oldenumerate=\enumerate
\def\enumerate{
\oldenumerate
\setlength{\itemsep}{-1ex}
}
\newcommand{\ignore}[1]{}
\begin{document}
\title{A Labeling Approach to Incremental Cycle Detection}

\author{Edith Cohen\thanks{Microsoft Research -- SVC, USA. {\tt
      edith@cohenwang.com}} $^\dagger$  \and Amos Fiat\thanks{Tel Aviv
    University, Tel Aviv, Israel. {\tt  \{fiat,haimk\}@cs.tau.ac.il}}
\and Haim Kaplan$^\dagger$ \and Liam
Roditty\thanks{Bar Ilan University, Ramat Gan, Israel. {\tt liam.roditty@biu.ac.il}}}

\maketitle

\begin{abstract}
In the \emph{incremental cycle detection} problem arcs are added to
a directed acyclic graph and the algorithm has to report if the new
arc closes a cycle. One seeks to minimize the total time to process
the entire sequence of arc insertions, or until a cycle appears.

In a recent breakthrough, Bender, Fineman, Gilbert and
Tarjan~\cite{BeFiGiTa11} presented two different algorithms, with
time complexity  $O(n^2 \log n)$ and $O(m \cdot \min \{ m^{1/2},
n^{2/3} \})$, respectively.

In this paper we introduce a new technique for incremental cycle
detection that allows us to obtain both bounds (up to a logarithmic
factor). Furthermore, our approach seems more amiable for
distributed implementation.
\end{abstract}\textbf{}

\thispagestyle{empty}

\newpage
\setcounter{page}{1}

\section{Introduction}


Let $G=(V,E)$ be a directed acyclic graph (DAG). In the
\emph{incremental cycle detection} problem edges are being added to
$G$ and the algorithm has to report on a cycle once a cycle is
formed.

This problem has a very extensive history
\cite{AlpernHRSZ90,Marchetti-SpaccamelaNR96,PearceK06,KatrielB06,LiuC07,AjwaniFM08,AjwaniF10,DBLP:journals/corr/abs-0711-0251,HKMST:Talg12,BFG:SODA09}.
For a thorough discussion of this work see~\cite{HKMST:Talg12}.  In
a recent breakthrough, Bender, Fineman, Gilbert and
Tarjan~\cite{BeFiGiTa11} presented an algorithm with $O(n^2 \log n)$
total running time. They also presented a different algorithm with a
running time of $O(m \cdot \min \{ m^{1/2}, n^{2/3} \})$.

In this paper we present a new and completely different technique
that allows us to obtain all the results of Bender et al. (up to
poly-logarithmic factors and randomization). Although we are not
getting any improved running times our technique is interesting from
several perspectives. We believe that our approach unifies all
previous algorithms into one algorithmic framework. Furthermore, our
algorithm seems (to us) much simpler than previous proposals.
Finally, because of highly local nature, it seems that it is trivial
to implement our incremental cycle detection algorithm in a
distributed environment, within certain caveats.

Roughly speaking, our framework works in the following way. As long
as a cycle is not formed we maintain a certain label $\ell(v)$ for
each vertex $v$ so that the labels constitute  a {\em weak
topological order}: That is, for every arc $(u,v)$, $\ell(u) \prec
\ell(v)$. These labels are useful to rule out the existence of paths
from a vertex $y$ to a vertex $x$ if $\ell(y) \succ \ell(x)$.

The development of our new labeling technique is inspired by the
work of Cohen~\cite{Cohen:1997} on estimating the size of the
transitive closure of a directed graph. More specifically,
Cohen~\cite{Cohen:1997} showed that if the vertices of an $n$ vertex
digraph  get random ranks from the range $1,\ldots,n$ then the
minimum rank vertex that can reach to every vertex $u$ can be
computed in $O(m)$ time for every $u\in V$. We can view the rank of
the minimal rank vertex that reaches $u$ as the label of $u$.
This label is a good estimate of the number of vertices that reach
 $u$. A label of small value indicates that the reachability set
is probably large.

In this paper we give a recursive version of the labels described
above. Let $0 < q \leq 1$. Given an $n$ vertex DAG, assign a random
permutation of the ranks $1,\ldots,qn$ to a randomly selected set of
$qn$ vertices. Other vertices are {\sl unranked}. The label of a
vertex $u$ is defined to be a sequence of vertices, the first of
which, $\ell_1(u)$, is the vertex of minimal rank amongst all ranked
vertices that can reach $u$.  The second vertex in the label of $u$
is the vertex of minimal rank amongst all ranked vertices $v\neq
\ell_1(u)$, such that $v$ is reachable from $\ell_1(u)$ and $u$ is
reachable from $v$. Subsequent vertices in the label of $v$ are
defined analogously. Notice that the first coordinate of each label
in our extended definition is the label from~\cite{Cohen:1997}.

Such random recursive labels have several properties of possible
interest:
\begin{enumerate}
  \item The expected length of such labels is logarithmic.
\item For every vertex $u$, consider the sequence of ranks associated with the vertices in the label of $u$, with $\infty$ appended at the end. Such sequences are lexicographically descending along any path through a DAG. Thus, in certain cases comparison of two labels can rule out the existence of a path.
\item For any vertex $u$, the set of vertices $v$ such that $u$ and $v$ have the same label and $u$ is reachable from $v$ is ``small". With high probability this set is $O(\log n/q)$.
\end{enumerate}

As in several previous papers, we do both forward and backwards
searches to determine if a cycle has been formed. One difference
between previous approaches and ours is that local criteria allow us
to prune both forward and backward searches. The labels contain
sufficient information so as to make this pruning efficient.
Moreover, the labels can be maintained over the sequence of insertions within the same
time bounds.

%
%


Using the labels, setting appropriate parameters, and some simple
data structures, we get a family of possible algorithms, which
unifies the results of several previous papers:

\begin{itemize}
\item For graphs with $m=O(n)$, choosing $q=1/\sqrt{n}$ gives us an algorithm with total time $O(n^{3/2}\log n)$.
For denser graphs, we draw a random rank for each arc with
probability $1/\sqrt{m}$ and set the rank of the vertex to be the
rank of its minimum incoming arc. This gives the analogous bound of
$O(m^{3/2}\log n)$.

\item Choosing $q=\sqrt[3]{\log(n)/n}$ balances the forward and backward search times to be $O(m \cdot n^{2/3}\log^{4/3} n)$.
\item Choosing $q=1$ requires no backward search, as vertices have
  unique labels, and the total time for forward searches and label updates is $O(n^2\log^2 n)$.

\end{itemize}

All of these variants can be implemented using message passing
algorithms, if one allows bidirectional communications and one
assumes perfect synchrony. With respect to distributed
implementation, it is often important to minimize the number of
messages.
 We remark that for each of these
variants, one can optimize $q$ so as to minimize the number of
messages.


\smallskip
\noindent {\bf Related work:}

A directed graph is acyclic if and only if it has a topological
order; a more recent generalization is that the strong components of
a directed graph can be ordered topologically
\cite{harary1965structural}. We can find a cycle in a directed graph
or a topological order in linear time either by repeatedly deleting
 vertices with no
predecessors \cite{KnuthS74} or by performing a depth-first search
\cite{Tarjan72dfs}. Depth-first search can also be used to find the
strong components and a topological order of these components in
$O(m)$ time \cite{Tarjan72dfs}.

  The
  digraph cycle detection problem has an extensive history
  \cite{AlpernHRSZ90,Marchetti-SpaccamelaNR96,PearceK06,KatrielB06,LiuC07,AjwaniFM08,AjwaniF10,DBLP:journals/corr/abs-0711-0251,HKMST:Talg12,BFG:SODA09}.
    The current state-of-the-art
time bounds by centralized algorithms are  $O(m^{3/2})$ by Haeupler
 et al.\ \cite{HKMST:Talg12}
 and $O(n^2\log n)$ by Bender  et al.\ \cite{BFG:SODA09}.
  The two-way search  algorithm of  Haeupler
 et al.\ maintains a complete topological order.
    When
an insertion occurs which is inconsistent with the order,  nodes are
shuffled to correct this.

 Bender et
al.\ \cite{BFG:SODA09} suggested a simpler algorithm that runs in
 $O(m \cdot \min \{ m^{1/2}, n^{2/3} \})$ time. This algorithm maintains only a
 weak topological order. It partitions
 the vertices  into levels and when an arc is inserted it performs a
 backward search within a level and a forward search across levels.
 This algorithm stops the backward search when it reaches a
prespecified number of arcs.

There has been little work on distributed cycle detection, even when
the graph is static. Fleischer et al.\ \cite{FleischerHP00}
suggested a divide and conquer based randomized algorithm for
finding strongly connected components that is easier to parallelize
and sequentially runs in expected $O(m\log n)$ time. The distributed
cycle detection problem also arises in the context of model checking
on large flow graphs. Barnat et al.\ \cite{BarnatBC05} gave a
distributed algorithm based on breadth first search which is
quadratic in the worst case. A distributed implementation of our
algorithm, which is subquadratic, is interesting even for a static
graph.

\smallskip
\noindent {\bf Organization of this paper:}

In Section \ref{prelim:sec} we give basic definitions and properties
of our labeling. In Section \ref{labels:sec} we consider the case
when ranked vertices and ranks are determined probabilistically and
give some properties that hold in expectation and with high
probability.
 In Section
\ref{sec:2directions} we present a dynamic algorithm for maintenance
of labels and analyze its complexity (time and message complexity).
In Section \ref{nsquare:sec} we give a variant of the dynamic
algorithm that gives us the $O(n^2\log^2 n)$ time result.

\section{Preliminaries} \label{prelim:sec}

Let $G=(V,A)$ be a directed acyclic graph with vertices $V$ and arcs $E$, $|V|=n$ and  $|A|=m$.
We define  $P(v)$ to be the set of all predecessors of $v$ ({\sl
i.e.}, for all $u\in P(v)$ there is a path from $u$ to $v$),  and
$S(v)$ to be the set of all successors of $v$ ({\sl i.e.}, for all
$u\in S(v)$ there is path from $v$ to $u$). We include $v$ in its
predecessors and successors sets, that is, $v\in P(v)\cap
S(v)$. Also, define $$D(u,v) = S(u) \cap P(v).$$

Let $L$ be a subset of $V$, $|L|=\lambda\leq n$.  Let $r:V\mapsto
Z^+\cup \{\infty\}$  be such that $r_{|L}:L\mapsto Z^+$ is one to
one and $r(v)=\infty$ for all $v\in V-L$. We say that the
vertices in $L$ are {\em ranked}.

We define $\overline{A}_1(v) \equiv P(v)$ and $A_1(v) = P(v)\cap L$, the set of ranked predecessors of
$v$. If $A_1(v) \not= \emptyset$ then we define $\ell_1(v) =
\argmin_{u\in A_1(v)} r(u)$, 
$\overline{A}_2(v)=  D(\ell_{1}(v),v)
\setminus \{\ell_1(v)\}$, and  $A_{2}(v) = \overline{A}_2(v) \cap L$. If $A_2(v)
\not= \emptyset$ then we define $\ell_2(v) = \argmin_{u\in A_2(v)}
r(u)$, $\overline{A}_3(v)=  D(\ell_{2}(v),v) \setminus \{\ell_2(v)\}
$, and  $A_{3}(v) = \overline{A}_3(v) \cap L$.  We continue in the same way and for all $i
\ge 1$ such that $A_i(v) \not= \emptyset$ we define $\ell_i(v)$, $\overline{A}_{i+1}(v)$,  and
$A_{i+1}(v)$. We define $k(v)$ to be $0$ if $A_1(v) = \emptyset$ and
 we define $k(v)$ to be the largest $i$ for which $A_i(v)
\not= \emptyset$.


 We define the
\emph{label} of a vertex $v\in V$ to be the sequence $\ell(v) =
\ell_1(v), \ell_2(v), \ldots, \ell_{k(v)}(v).$ Figure~\ref{graph:fig} in the Appendix presents an example of this labeling.

The following
properties stem directly from the definitions above:
\begin{prop}
For all $v\in V$,
\begin{itemize}
\item For all $1 < i \leq k(v)$, $r(\ell_{i-1}(v)) < r(\ell_i(v))$.
\item For every $v\in V$ such that $r(v)<\infty$,
 $v\in A_1(v)\cap A_2(v) \cap \cdots \cap A_{k(v)}(v)$ and $\ell_{k(v)}(v)=v$.
\end{itemize}
\end{prop}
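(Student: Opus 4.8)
The plan is to verify both bullet points directly from the recursive construction of the labels, since the proposition asserts that they "stem directly from the definitions." The first claim is the monotonicity of ranks along a label, and the second is that a ranked vertex always appears as the final entry of its own label.

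For the first bullet, I would argue by tracking how $\ell_i(v)$ is chosen from $A_i(v)$. Recall that $\overline{A}_{i+1}(v) = D(\ell_i(v), v) \setminus \{\ell_i(v)\} = (S(\ell_i(v)) \cap P(v)) \setminus \{\ell_i(v)\}$, and $A_{i+1}(v) = \overline{A}_{i+1}(v) \cap L$. The key observation is that $\ell_i(v) \in P(v)$ and $\ell_i(v) \in S(\ell_{i-1}(v))$, so $\ell_i(v)$ is itself a candidate in $\overline{A}_i(v)$; more precisely, $\ell_i(v) \in A_i(v)$ by construction (it is the argmin). Since $\ell_i(v)$ is explicitly removed from $\overline{A}_{i+1}(v)$, and since every element of $A_{i+1}(v)$ lies in $S(\ell_i(v)) \cap P(v) \cap L$ with $r$ injective on $L$, I would show $\ell_{i+1}(v)$ has strictly larger rank than $\ell_i(v)$. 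The cleanest route is: $\ell_i(v)$ was the minimum-rank element of $A_i(v)$, and $A_{i+1}(v) \subseteq A_i(v) \setminus \{\ell_i(v)\}$ (this containment is the crux), so the minimizer over the smaller set has rank strictly exceeding $r(\ell_i(v))$ by injectivity. Establishing $A_{i+1}(v) \subseteq A_i(v)$ reduces to showing $D(\ell_i(v), v) \subseteq D(\ell_{i-1}(v), v)$, i.e. $S(\ell_i(v)) \subseteq S(\ell_{i-1}(v))$, which holds because $\ell_i(v) \in S(\ell_{i-1}(v))$ and successor sets are transitively closed (if $w$ is reachable from $\ell_i(v)$ and $\ell_i(v)$ is reachable from $\ell_{i-1}(v)$, then $w$ is reachable from $\ell_{i-1}(v)$).

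For the second bullet, fix a ranked vertex $v$ with $r(v) < \infty$. I would show by induction on $i$ that $v \in A_i(v)$ for every $1 \le i \le k(v)$, as long as the label has not yet terminated at $v$. Since $v \in P(v)$ and $v \in L$, we have $v \in A_1(v)$, giving the base case. For the inductive step, suppose $v \in A_i(v)$. If $\ell_i(v) = v$ then by injectivity of $r$ on $L$ the recursion terminates here (any further candidate would have to beat $v$'s rank while lying in a set from which $v$ has been removed), so $k(v) = i$. Otherwise $\ell_i(v) \neq v$, and I must check $v \in A_{i+1}(v) = (S(\ell_i(v)) \cap P(v) \cap L) \setminus \{\ell_i(v)\}$: clearly $v \in P(v) \cap L$ and $v \neq \ell_i(v)$, and $v \in S(\ell_i(v))$ because $\ell_i(v) \in P(v)$ means $v$ is reachable from $\ell_i(v)$. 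Thus $v$ survives into every nonterminal stage, and since each stage strictly increases the minimum attainable rank while $v$'s rank is fixed and finite, the process must eventually select $v$ itself, forcing $\ell_{k(v)}(v) = v$.

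The main obstacle, modest as it is, will be the second bullet's termination argument: one must rule out that the label ends at some $\ell_{k(v)}(v) \neq v$ while $v$ is still eligible. The resolution is that $v \in A_{k(v)+1}(v)$ would be nonempty whenever $\ell_{k(v)}(v) \neq v$ (by the survival argument above), contradicting maximality of $k(v)$; hence the terminal label entry is forced to be $v$. I would present this as a short contrapositive rather than grinding through the full induction, since the finiteness of the strictly increasing rank sequence against a fixed finite $r(v)$ already guarantees that $v$ is reached in at most $r(v)$ steps.
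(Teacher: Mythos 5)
The paper offers no proof of this Property at all --- it is asserted to ``stem directly from the definitions'' --- so the only question is whether your reconstruction is sound. Its skeleton is: the containment $A_{i+1}(v)\subseteq A_i(v)\setminus\{\ell_i(v)\}$ for the first bullet, and the survival induction plus contrapositive for the second. But your written justifications never invoke acyclicity of $G$, and both bullets are in fact \emph{false} for general digraphs: on the $2$-cycle with arcs $(a,b),(b,a)$, both vertices ranked and $r(a)=1<r(b)=2$, one gets $\ell_1(b)=a$, $\ell_2(b)=b$, $\ell_3(b)=a$, so the rank sequence $1,2,1,\dots$ is not increasing and the recursion never terminates. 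Any argument that only uses transitivity of reachability and injectivity of $r$ therefore cannot be complete; here is where yours leaks.

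For the first bullet, your reduction of $A_{i+1}(v)\subseteq A_i(v)$ to $S(\ell_i(v))\subseteq S(\ell_{i-1}(v))$ is not a complete reduction when $i\ge 2$: since $A_i(v)=\bigl(D(\ell_{i-1}(v),v)\setminus\{\ell_{i-1}(v)\}\bigr)\cap L$, you must also rule out $\ell_{i-1}(v)\in D(\ell_i(v),v)$, i.e.\ $\ell_{i-1}(v)\in S(\ell_i(v))$. This is exactly where acyclicity enters: $\ell_i(v)\in S(\ell_{i-1}(v))$ and $\ell_i(v)\neq\ell_{i-1}(v)$, so mutual reachability would close a cycle (the $2$-cycle example above breaks precisely this step). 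For the second bullet, your termination claim --- that once $\ell_i(v)=v$ the recursion stops ``by injectivity of $r$ on $L$,'' since further candidates ``would have to beat $v$'s rank'' --- is wrongly justified: the recursion continues exactly when $A_{i+1}(v)\neq\emptyset$, and its elements need not beat any rank (by the first bullet they have \emph{larger} rank). The correct reason is again acyclicity: $\overline{A}_{i+1}(v)=\bigl(S(v)\cap P(v)\bigr)\setminus\{v\}=\emptyset$ because in a DAG no vertex other than $v$ is both a successor and a predecessor of $v$. Your closing contrapositive does not sidestep this point: concluding $v\in A_{k(v)+1}(v)$ from the survival induction requires $\ell_i(v)\neq v$ for every $i\le k(v)$, and ruling out $\ell_i(v)=v$ at some $i<k(v)$ is precisely this termination fact. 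Both repairs are one-liners, but as written the proof assigns the load-bearing steps to transitivity and rank injectivity, and neither suffices.
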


We now define a partial order $\prec$ of the labels.  The order
$\prec$  corresponds to a
{\em decreasing} lexicographic order on $$r(\ell(v)) \equiv (r(\ell_1(v)),r(\ell_2(v)),\ldots,
r(\ell_{k(v)}(v)), +\infty)\ .$$
That is, for $u,v\in V$, $\ell(v) \prec \ell(u) \iff r(\ell(v)) \raisebox{-3pt}{$\overset{>}{\mbox{\rm\tiny lex}}$}  r(\ell(u))$,
which happens if either of the following holds:
\begin{description}
\item[Case 1:] For some $1\leq i\leq k(u)$ we have that $r(\ell_i(u))>r(\ell_i(v))$, and for all $1 \leq j < i$: $\ell_j(u) = \ell_j(v)$, or,
\item[Case 2:] $k(u)<k(v)$ and for all $1 \leq j \leq k(u)$:
$\ell_j(u)=\ell_j(v)$.
\end{description}

For $u,v \in V$, define $\mathrm{LCP}(u,v)$ to  be the longest
common prefix of $\ell(u)$ and $\ell(v)$. By definition, both $u$
and $v$ are reachable from every vertex in $\mathrm{LCP}(u,v)$. In
particular, if $\ell(u)$ is a prefix of $\ell(v)$ and $u\in L$
(which implies that $u\in \ell(u)$) then $v$ is reachable from $u$.
Next, we show that in certain cases the labels can be used to rule out the existence of a path. This will be used later on by our cycle detection algorithm.

\begin{thm}\label{T-no-path}
If $\ell(u) \glex \ell(v)$ then there is no path from $v$ to $u$.
\end{thm}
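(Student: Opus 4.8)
The plan is to prove the contrapositive: if $u$ is reachable from $v$, then $\ell(u)\not\glex\ell(v)$. Since $\glex$ is the \emph{decreasing} lexicographic order on the appended rank sequences $r(\ell(\cdot))=(r(\ell_1(\cdot)),\ldots,r(\ell_{k(\cdot)}(\cdot)),+\infty)$, the relation $\ell(u)\glex\ell(v)$ means that $r(\ell(u))$ is lexicographically strictly larger than $r(\ell(v))$. So it suffices to show that reachability forces $r(\ell(u))$ to be lexicographically \emph{no larger} than $r(\ell(v))$; equivalently, these sequences descend lexicographically along every path, which is exactly Property~2 of the introduction. The only structural fact I would extract from ``$u$ is reachable from $v$'' is that $P(v)\subseteq P(u)$, and hence that $D(z,v)=S(z)\cap P(v)\subseteq S(z)\cap P(u)=D(z,u)$ for every vertex $z$.

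I would then argue by induction on the coordinate $i$, carrying the invariant that $\ell(u)$ and $\ell(v)$ agree on their first $i-1$ entries. For $i=1$ we have $\overline{A}_1(v)=P(v)\subseteq P(u)=\overline{A}_1(u)$, so $A_1(v)\subseteq A_1(u)$; for $i\ge 2$ the common prefix supplies a \emph{shared} pivot $z_{i-1}=\ell_{i-1}(u)=\ell_{i-1}(v)$, whence $\overline{A}_i(v)=D(z_{i-1},v)\setminus\{z_{i-1}\}\subseteq D(z_{i-1},u)\setminus\{z_{i-1}\}=\overline{A}_i(u)$ and again $A_i(v)\subseteq A_i(u)$. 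Minimizing the rank over these nested sets gives $r(\ell_i(u))\le r(\ell_i(v))$ whenever both are defined. If this inequality is strict, then $u$'s appended sequence is lexicographically smaller at coordinate $i$ and we are finished; if the labels terminate here ($A_i(v)=\emptyset$, which by the containment cannot occur while $A_i(u)$ alone is nonempty only in the reverse direction, so either $v$'s sequence takes the value $+\infty$ at position $i$ while $u$ may still carry a finite rank, or both sequences end simultaneously), the comparison still goes the right way.

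The main obstacle, and the only place the hypotheses on $r$ are truly used, is the equality case $r(\ell_i(u))=r(\ell_i(v))$. Here I would invoke that $r$ is injective on $L$ and that $\ell_i(u),\ell_i(v)\in L$ to conclude $\ell_i(u)=\ell_i(v)$, which is precisely what extends the common prefix and keeps the next pivot shared. Without this coincidence the recursively defined sets $\overline{A}_{i+1}(u)$ and $\overline{A}_{i+1}(v)$ would be anchored at different vertices and the containment $\overline{A}_{i+1}(v)\subseteq\overline{A}_{i+1}(u)$ would collapse; injectivity is exactly the ingredient that propagates the induction. Because the ranks $r(\ell_i(\cdot))$ strictly increase in $i$ (by the first item of the Property), both labels are finite, so the induction terminates, either detecting a strict lexicographic descent at some coordinate or exhausting both labels with identical appended sequences. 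In every case $r(\ell(u))$ is lexicographically no larger than $r(\ell(v))$, so $\ell(u)\not\glex\ell(v)$, which establishes the contrapositive and hence the theorem.
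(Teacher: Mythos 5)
Your proof is correct and rests on exactly the same key fact as the paper's: a path from $v$ to $u$ together with an agreed prefix of labels forces the containment $A_i(v)\subseteq A_i(u)$, hence $r(\ell_i(u))\le r(\ell_i(v))$. The paper packages this as a contradiction against the two cases defining $\glex$ (your strict-descent case is its Case~1; your termination analysis, where containment rules out $v$'s label continuing past the end of $u$'s, is its Case~2), while you run the contrapositive as a coordinate-by-coordinate induction with equality propagated via injectivity of $r$ --- a purely organizational difference.
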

\begin{proof}
Assume that case 1 holds and let $i$ be minimal such that
$r(\ell_i(u))>r(\ell_i(v))$. Note that $\ell_i(u)$ is the vertex of
minimal rank in $A_i(u)$ and $\ell_i(v)$ is the vertex of minimal
rank in $A_i(v)$. If there is a path from $v$ to $u$ then it must be
that $A_i(v)\subset A_i(u)$, which implies that $r(\ell_i(u)) \leq
r(\ell_i(v))$, a contradiction.

Now, assume that case 2 holds and consider the vertex $w=\ell_{k(u)}(u)$ ($=\ell_{k(u)}(v)$), and the vertex $w'=\ell_{k(u)+1}(v)$. There is a path from $w$ to $w'$,  and there is a path
from $w'$ to $v$. If there is also a path from $v$ to $u$ we have that $w'\in A_{k(u)+1}(u)$, but $A_{k(u)+1}(u) = \emptyset$, a contradiction.
\end{proof}

\begin{figure}[t]
\centering
\ifpdf
\includegraphics[width=0.4\textwidth]{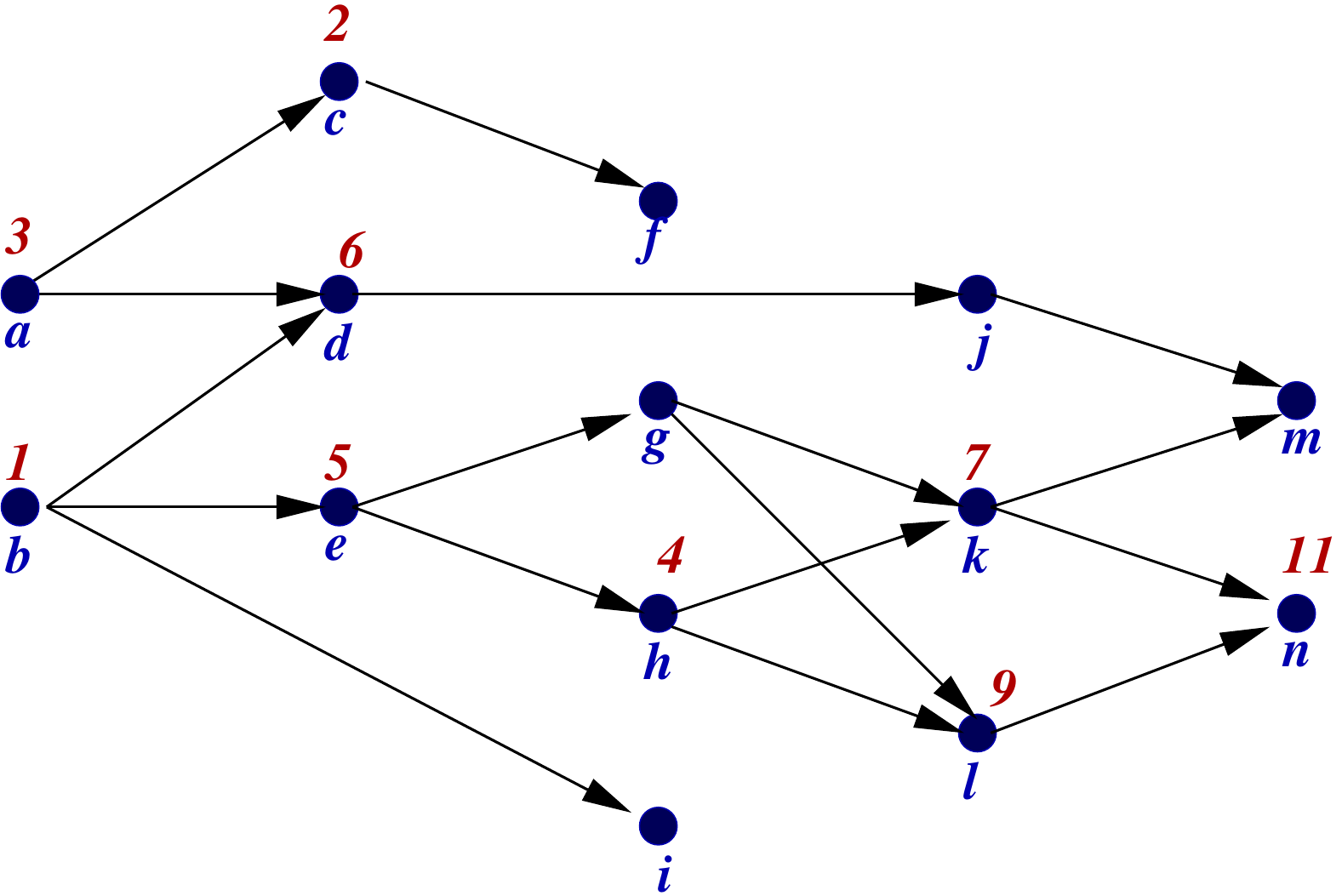}
\else
\epsfig{figure=graph.eps,width=0.4\textwidth}
\fi

\caption{Example graph. The nodes with numbers are the ranked nodes $L=\{a,b,c,d,e,h,k,l,n\}$.  For node $n$ we have $A_1(n)= \{b,e,h,k,l,n\}$, $\ell_1(n)=b$, $A_2(n)=\{e,h,k,l,n\}$, $\ell_2(n)=h$, $A_3(n)=\{k,l,n\}$, $\ell_3(n)=k$,  $A_4(n)=\{n\}$, $\ell_4(n)=n$, $A_5(n)=\emptyset$.  For node $f$ we have $A_1(f)=\{a,c\}$, $\ell_1(f)=c$, $A_2(f)=\emptyset$.
Therefore $k(f)=1$ and $k(n)= 4$. \label{graph:fig}}
\end{figure}

\section{Properties of labels}  \label{labels:sec}

In this section we give properties of our new labelings when ranks are assigned at random. In particular we show that
\begin{itemize}
  \item Labels are ``short'', see Lemma \ref{L-label-size}.
  \item The set of predecessors of a vertex $v$ that have the same label as $v$ is ``small'', see Lemma \ref{lem:backsize}.
\end{itemize}
These properties are used in Section 4 for the analysis of our algorithms.

Let $L$ be a subset of $V$, $|L| = \lambda \le n$, and let $R$ be
the set of all one to one mappings $r:L\mapsto
\{1,\ldots,\lambda\}$. In this section we first analyze the size of
the labels when $r$ is chosen uniformly at random from $R$.

Let $v\in V$, every mapping $r\in R$ determines  $A^r_i(v)$,
$\ell^r_i(v)$,  for $1 \leq i$ and $k^r(v)$. We omit the mapping $r$
when clear from the context (as done above). For any sequence of
subsets of ranked vertices  $\bGamma = \langle \Gamma_1 \supset
\Gamma_2 \supset \dots \supset \Gamma_{j}\rangle$  define the subset
of mappings
$$R(v,\bGamma,j) = \{r\in R \mid A^r_i(v) = \Gamma_i \mbox{\rm\ for
all $1 \leq i \leq j$}\}.$$

We now show:
\begin{lem}\label{L-set-size} For any  $v\in V$, any $x\in [0,1]$,
 and a sequence of
subsets of ranked vertices  $\bGamma = \langle \Gamma_1 \supset
\Gamma_2 \supset \dots \supset \Gamma_{j}\rangle$   such that
$R(v,\bGamma,j) \not=\emptyset$ we have that
$$\Pr_{r\sim
  R(v,\bGamma,j)}\left[\frac{|A^r_{j+1}(v)|}{|\Gamma_{j}|} \leq
  x\right]   \geq x\ .$$
\end{lem}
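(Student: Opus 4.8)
The plan is to pin down the conditional distribution of $A_{j+1}(v)$ given the event $R(v,\bGamma,j)$ and then reduce to a purely combinatorial counting statement about reachability in the DAG. First I would record a structural fact that follows from acyclicity: the sets are nested, $\overline{A}_{i+1}(v)\subseteq \overline{A}_i(v)$ and hence $A_{i+1}(v)\subseteq A_i(v)$. Indeed, if $u\in \overline{A}_{i+1}(v)$ then $u$ is reachable from $\ell_i(v)$, which is itself reachable from $\ell_{i-1}(v)$, so $u\in S(\ell_{i-1}(v))\cap P(v)$; and $u\neq \ell_{i-1}(v)$, since a path $\ell_{i-1}(v)\to\ell_i(v)\to u=\ell_{i-1}(v)$ would be a cycle. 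Given nestedness, $A_{j+1}(v)\subseteq \Gamma_j$, and since every vertex of $\Gamma_j$ already lies in $L\cap P(v)$, this yields the clean description $A_{j+1}(v)=\{u\in\Gamma_j : u\in S(\ell_j(v)),\ u\neq\ell_j(v)\}$. Writing $C(w)$ for the set of strict descendants of $w$ inside $\Gamma_j$ under reachability, we get $A_{j+1}(v)=C(\ell_j(v))$, where $\ell_j(v)$ is just the minimum-rank vertex of $\Gamma_j$.

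The heart of the argument, and the step I expect to be most delicate, is to show that conditioned on $R(v,\bGamma,j)$ the relative rank order inside $\Gamma_j$ is uniform, so that $\ell_j(v)=\argmin_{u\in\Gamma_j}r(u)$ is uniformly distributed over $\Gamma_j$. The key observation is that the event $R(v,\bGamma,j)$ depends on $r$ only through $\ell_1(v),\dots,\ell_{j-1}(v)$, because $A_1(v)$ is fixed and each $A_{i+1}(v)$ is a deterministic function of $\ell_i(v)$. Moreover each $\ell_i(v)$ with $i\le j-1$ lies in $\Gamma_i\setminus\Gamma_{i+1}$ and has rank smaller than every other vertex of $\Gamma_i$, in particular smaller than every vertex of $\Gamma_j\subseteq\Gamma_{i+1}$. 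Consequently, permuting the rank values among the vertices of $\Gamma_j$ (keeping all other ranks and the value-set of $\Gamma_j$ fixed) leaves each minimum $\ell_i(v)$, $i\le j-1$, unchanged, hence leaves the event invariant. I would make this precise by an induction on $i$ showing $\ell_i(r')=\ell_i(r)$ for the permuted ranking $r'$. Since the uniform measure on $R$ is invariant under such permutations, the conditional law given $R(v,\bGamma,j)$ inherits this invariance, and therefore the identity of the minimum of $\Gamma_j$ is equidistributed over $\Gamma_j$.

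Finally I would reduce to a deterministic inequality. Using the uniformity of $\ell_j(v)$,
\[
\Pr_{r\sim R(v,\bGamma,j)}\!\left[\frac{|A^r_{j+1}(v)|}{|\Gamma_j|}\le x\right]=\frac{\left|\{w\in\Gamma_j : |C(w)|\le x|\Gamma_j|\}\right|}{|\Gamma_j|},
\]
so it suffices to prove that for $t=x|\Gamma_j|$ at least $t$ vertices $w\in\Gamma_j$ satisfy $|C(w)|\le t$. This follows by fixing a topological order $u_1,\dots,u_N$ (with $N=|\Gamma_j|$) of the sub-DAG induced on $\Gamma_j$: all descendants of $u_i$ occur later, so $|C(u_i)|\le N-i$, and hence every $u_i$ with $i\ge N-t$ satisfies $|C(u_i)|\le t$. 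There are at least $t$ such indices, which gives the desired bound and completes the proof. The only genuinely nontrivial point is the invariance/uniformity claim of the second paragraph; the counting step is routine once that is in place.
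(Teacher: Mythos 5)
Your proof is correct and follows essentially the same route as the paper's: uniformity of the minimum-rank vertex of $\Gamma_j$ under the conditional distribution (which the paper asserts via its ``equal-size equivalence classes'' partition and you justify by permutation invariance), combined with a topological-order counting argument showing a uniformly chosen vertex of $\Gamma_j$ has at most $x|\Gamma_j|$ strict descendants with probability at least $x$. The only difference is that you spell out details the paper leaves implicit (nestedness of the $A_i$, and why the conditioning event is invariant under permuting ranks within $\Gamma_j$), which strengthens rather than changes the argument.
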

\begin{proof}
 The set of mappings $R(v,\bGamma,j)$ are partitioned into
 $|\Gamma_{j}|!$ equal size equivalence classes. Each class
contains  all the rankings that induce the same
 relative order of the ranks
$r(s)$, $s\in \Gamma_{j}$. That is,  rankings $r_1$ and $r_2$ are in
the same class  iff for any pair of vertices $x,y \in \Gamma_j$,
$r_1(x) < r_1(y) \Leftrightarrow r_2(x) < r_2(y)$.

Consider some (arbitrary, fixed) topological order on $\Gamma_{j}$,
$U=\langle u_1, u_2, \ldots, u_{|\Gamma_{j}|}\rangle$. For $r\in
R(v,\bGamma,j)$ let $m(r,U) = \argmin_i r(u_i)$, {\sl i.e.}, the
position in the topological order of the vertex in $\Gamma_j$ with minimum rank value.
It follows that for all $1 \leq i \leq |\Gamma_{j}|$,
   $$\Pr_{r\sim R(v,\bGamma,j)} [ m(r,U) = i ] = \frac{1}{|\Gamma_{j}|}.$$

It follows that
  \begin{eqnarray*}
\Pr_{r\sim R(v,\bGamma,j)} \left[|A^r_{j+1}(v)| \leq x |\Gamma_{j}|
\right] &\geq& \Pr_{r\sim R(v,\bGamma,j)} \left[ m(r,U) \geq (1-x)
  |\Gamma_{j}| \right] \geq   x \ .
\\ \end{eqnarray*}
\end{proof}

 We are now
ready to bound the label size.

\begin{lem}\label{L-label-size}
For any $c\geq e$,  the probability that there exists $u\in V$ whose
label has more than $c \log \lambda$ vertices is at most 
$\min\{\lambda, \lambda^{-1-c (\ln c -1)}\}$.
 \end{lem}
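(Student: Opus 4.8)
The plan is to bound the label length by tracking how the sets $A_i(v)$ shrink as $i$ increases, using Lemma~\ref{L-set-size} as the per-step contraction estimate, and then to apply a union bound over all vertices. The key observation is that $|A_{i+1}(v)| \le |A_i(v)|$ always, and Lemma~\ref{L-set-size} tells us that at each step the ratio $|A_{i+1}(v)|/|A_i(v)|$ is at most $x$ with probability at least $x$; equivalently, the fraction by which the set \emph{fails} to shrink below $x|\Gamma_j|$ is controlled. Since $|A_1(v)| \le \lambda$ and the label ends once some $A_i(v) = \emptyset$ (i.e.\ has size $0$, which happens no later than when the size drops below $1$), a label of length exceeding $c\log\lambda$ forces the sequence $|A_1(v)|, |A_2(v)|, \dots$ to stay at least $1$ for more than $c\log\lambda$ steps, so the product of the successive shrinkage ratios must be large.

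The core of the argument is to set this up as a product of conditionally independent-looking random variables. First I would fix a vertex $v$ and define, for each step $j$, the ratio $X_j = |A_{j+1}(v)|/|A_j(v)|$. Lemma~\ref{L-set-size} gives, conditioned on the entire history $\bGamma = \langle A_1(v) \supset \dots \supset A_j(v)\rangle$, the tail bound $\Pr[X_j \le x \mid \bGamma] \ge x$ for every $x \in [0,1]$. This is exactly the statement that $X_j$ conditionally stochastically dominates a uniform$[0,1]$ variable from below in the sense needed, so that $-\log X_j$ is conditionally stochastically \emph{dominated} by an exponential-type variable. The plan is then to bound the moment generating function: conditioned on the history, $\E[X_j^{-t} \mid \bGamma] \le \int_0^1 x^{-t}\,dx \cdot(\text{correction})$, which for suitable $t<1$ is finite, specifically $\E[X_j^{-t}\mid\bGamma]\le \frac{1}{1-t}$. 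Chaining these conditional expectations multiplicatively (tower property) over the at least $c\log\lambda$ steps that a long label requires, and using that $\prod_j X_j = |A_{\text{last}}(v)|/|A_1(v)| \ge 1/\lambda$ when the label is long, yields a Chernoff-style bound of the form $\Pr[k(v) > c\log\lambda] \le (1-t)^{-c\log\lambda}\,\lambda^{-t}$ after choosing the exponent optimally.

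To extract the stated bound I would optimize over $t \in (0,1)$: writing the exponent in base $e$ and setting the natural parameter to balance the two factors, the optimal choice produces a bound of the shape $\lambda^{-1 - c(\ln c - 1)}$, where the condition $c \ge e$ guarantees $\ln c - 1 \ge 0$ so the exponent is at most $-1$ and the bound is genuinely decaying. The $\min\{\lambda, \cdot\}$ in the statement simply records that probabilities never exceed $1$ and that a trivial union bound over $\lambda$ ranked vertices (or $n$ vertices, but the label depends only on which ranked predecessors exist) gives the factor $\lambda$; I would apply the single-vertex tail bound and multiply by the number of vertices, absorbing this factor into the exponent so that the net union-bounded probability is $\lambda \cdot \lambda^{-2 - c(\ln c -1)} = \lambda^{-1 - c(\ln c -1)}$.

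The main obstacle I anticipate is handling the \emph{conditioning} correctly: Lemma~\ref{L-set-size} is a statement about sampling $r$ uniformly from $R(v,\bGamma,j)$, i.e.\ conditioned on the prefix of the shrinking sequence, so the successive ratios are not independent and I must be careful that the tower-property chaining is valid — each step's bound holds uniformly over all admissible histories, which is precisely why the conditional MGF bound $\E[X_j^{-t}\mid\bGamma]\le (1-t)^{-1}$ can be multiplied across steps without tracking the joint distribution. A secondary subtlety is the discreteness: the sets have integer sizes, so ``shrinking by a factor $x$'' and ``reaching size $0$'' must be reconciled (a size-$1$ set either yields a nonempty $A_{i+1}$ or terminates), but this only helps us, since integrality means the product of ratios over a surviving chain of length $k$ is bounded below by $1/\lambda$, and any rounding can only shorten labels. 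Getting the constant in the exponent to match $c(\ln c - 1)$ exactly will require the clean choice $t = 1 - 1/c$ in the optimization, which I would verify by substitution.
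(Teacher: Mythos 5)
Your skeleton is essentially the paper's: fix $v$, use Lemma~\ref{L-set-size} to dominate each ratio $X_j=|A_{j+1}(v)|/|A_j(v)|$ by a $U[0,1]$ variable uniformly over admissible histories, chain across steps by the tower property, and translate ``label longer than $k$'' into ``$\prod_{j\le k}X_j\ge 1/\lambda$''. The paper does the chaining by stochastic domination of $\sum_j(-\ln X_j)$ by a Gamma$(k,1)$ variable and evaluates its CDF at $\ln\lambda$ via Stirling; a Chernoff/MGF chaining is a legitimate alternative that yields the same exponent. But your key step is inverted, in two compounding ways. (i)~Direction of domination: $\Pr[X_j\le x\mid\bGamma]\ge x$ says $X_j$ is stochastically \emph{smaller} than uniform, so for the \emph{decreasing} function $x\mapsto x^{-t}$ you get $\E[X_j^{-t}\mid\bGamma]\ge \frac{1}{1-t}$, not $\le$; in fact nothing in Lemma~\ref{L-set-size} bounds $\E[X_j^{-t}\mid\bGamma]$ from above at all, since the lemma permits $X_j$ to put most of its mass near $0$, where $x^{-t}$ blows up. (ii)~Wrong tail: negative moments, via Markov applied to $(\prod_j X_j)^{-t}$, bound $\Pr[\prod_j X_j\le a]$, i.e.\ the probability of \emph{fast} shrinkage; the long-label event is \emph{slow} shrinkage, $\prod_j X_j\ge 1/\lambda$. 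Symptomatically, your final expression $(1-t)^{-c\log\lambda}\lambda^{-t}$ does not decay: writing it as a power of $\lambda$ (with natural logarithms, as in the paper's proof), its exponent is $c\ln\frac{1}{1-t}-t\ge (c-1)t>0$ for every $t\in(0,1)$, so no choice of $t$ --- including your proposed $t=1-1/c$ --- produces the claimed bound.

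The repair stays within your framework: use \emph{positive} moments. Since $x\mapsto x^{t}$ is increasing, the same domination gives $\E[X_j^{t}\mid\bGamma]\le\int_0^1 x^{t}\,dx=\frac{1}{1+t}$ for every $t>0$, uniformly over histories; the tower property then yields $\E\left[\left(\prod_{j\le k}X_j\right)^{t}\right]\le (1+t)^{-k}$, and Markov gives $\Pr\left[\prod_{j\le k}X_j\ge 1/\lambda\right]\le \lambda^{t}(1+t)^{-k}$. With $k=c\ln\lambda$ and the optimal choice $t=c-1$ (positive since $c\ge e>1$), this is exactly $\lambda^{\,c-1-c\ln c}=\lambda^{-1-c(\ln c-1)}$. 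A secondary discrepancy: your union-bound bookkeeping presumes a per-vertex bound of $\lambda^{-2-c(\ln c-1)}$, which neither your argument nor the corrected one produces; the per-vertex bound is $\lambda^{-1-c(\ln c-1)}$, and multiplying it by the number of vertices does not give the stated quantity (the paper's own proof, for what it is worth, also only establishes the per-vertex bound before asserting the ``there exists $u\in V$'' form).
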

\begin{proof}
Consider tne distribution on the ratio $|A_{j+1}|/| A_j|$.  From
Lemma~\ref{L-set-size}, 
this distribution is
dominated by the uniform disribution $U[0,1]$, in the sense that
$\forall x\in [0,1],  \Pr[|A_{j+1}|/|A_j| \leq x]  \geq x$.  The
distribution on the logarithm of the ratio is dominated by  $-\ln u$,
where $u\sim U[0,1]$.  This is an exponential distribution with
parameter $1$.  We now consider the product of these ratios over $j=1,\ldots,k$,
which is the ratio $|A_{k+1}|/|A_1|$.
The negated logarithm of the product, $-\ln  (|A_{k+1}|/|A_1|)$  is
the sum of the negated  logarithms of the
ratios $|A_{j+1}|/|A_j$ , which is dominated by the random variable $S_k$ that is the
sum of $k$ i.i.d exponential random variables.  This is a gamma distribution
which has cummulative distribution function (CDF) 
$\frac{x^{k-1}\exp(-x)}{k!}$.  From this domination relation, it
follows that the
probability that the label size is more than $k$ is at most 
$\Pr[S_k \leq \ln \lambda]$.  Substituting $x=\ln \lambda$ in the
CDF  we obtain the bound
$\frac{1}{\lambda k!} (\ln\lambda)^{k-1}$.  We substitute  $k= c
\ln\lambda$ and use the Stirling bound
$k! \geq  \sqrt{2\pi k} \left(\frac{k}{e}\right)^k$
obtaining, for $\lambda,c > e$, a bound of $\lambda^{-1-c (\ln c -1)}$.
To complete the proof for $\lambda \leq e$, 
note that the label size can be at most $\lambda$.
\end{proof}

The previous lemma applies for any set $L$ of labeled vertices. We now
consider the setting in which the set $L$ itself is also chosen at
random. The labels of the vertices are computed as before.
 Given a digraph $G=(V,E)$, a $q$-labeling of $G$ is the following:
 \begin{itemize}
 \item Each  vertex chooses to be ranked independently with probability $q$, {\sl i.e.}, $\E[\lambda]=qn$.
 \item The ranks of the ranked vertices is a random permutation
  of $1,\ldots,\lambda$ (by choosing ranks at random from a set of size $\lambda^c$, we can assume that the ranks are distinct and the relative order between them is a random permutation).
 \end{itemize}

 Given any graph $G$, any labeling $\ell$, and for any $v\in V$, let $I(G,\ell,v) = \{u\in P(v)\mid \ell(u)=\ell(v), \mbox{\rm\ $u$ not ranked}\}$, note that $|I(G,\ell,v)| \leq |P(v)|$ for all $G$, $\ell$, and $v$.

 Given $G$ and labeling $\ell$, define $I(G,\ell)$ to be the maximum over $v$ of $I(G,\ell,v)$.

 \begin{lem} \label{lem:backsize}
   For any graph $G$ with $n$ vertices, $0<q\leq 1$, and $c\geq
   3$, $$\mbox{\rm Prob}_{\ell \sim Q}[ I(G,\ell) \leq c\log(n) /q ] \geq 1-1/n^{c-2},$$ where the distribution $Q$ is the space of $q$-labelings.
 \end{lem}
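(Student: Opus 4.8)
The plan is to reduce this statement about the random, label-dependent sets $I(G,\ell,v)$ to a union bound over \emph{deterministic} vertex pairs. The key observation is that, although $I(G,\ell,v)$ depends intricately on the random labeling, its elements are all forced to lie inside a set of the form $D(w,v)$ that is determined by the graph alone, once we fix the last vertex $w=\ell_{k(v)}(v)$ of the label of $v$.

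First I would prove the structural containment
$$I(G,\ell,v)\subseteq D(w,v)\setminus\{w\},\qquad w=\ell_{k(v)}(v),$$
together with the fact that every vertex of $D(w,v)\setminus\{w\}$ is unranked. The latter is immediate, since $D(w,v)\setminus\{w\}=\overline{A}_{k(v)+1}(v)$ and $A_{k(v)+1}(v)=\overline{A}_{k(v)+1}(v)\cap L=\emptyset$ by the maximality of $k(v)$. For the containment, take any $u\in I(G,\ell,v)$. Since $\ell(u)=\ell(v)$ we have $k(u)=k(v)$ and $\ell_{k(v)}(u)=w$, so $w$ reaches $u$ and thus $u\in S(w)$; also $u\in P(v)$ by the definition of $I$. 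Hence $u\in S(w)\cap P(v)=D(w,v)$, and $u\neq w$ because $u$ is unranked while $w$ is ranked.

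Next I would use this containment to bound the failure probability. For each ordered pair $(w,v)$ let $B_{w,v}$ be the event that every vertex of $D(w,v)\setminus\{w\}$ is unranked. The containment shows that whenever some $v$ has $|I(G,\ell,v)|>c\log n/q$, the pair $(w,v)$ with $w=\ell_{k(v)}(v)$ satisfies $|D(w,v)\setminus\{w\}|>c\log n/q$ and $B_{w,v}$; therefore
$$\Bigl\{\exists v:\ |I(G,\ell,v)|>\tfrac{c\log n}{q}\Bigr\}\subseteq\!\!\bigcup_{(w,v):\,|D(w,v)\setminus\{w\}|>c\log n/q}\!\!B_{w,v}.$$
The decisive point is that each set $D(w,v)=S(w)\cap P(v)$ is fixed by $G$ and does not depend on the random ranking, so for a fixed pair the events ``$u$ is unranked'', $u\in D(w,v)\setminus\{w\}$, are independent, giving $\Pr[B_{w,v}]=(1-q)^{|D(w,v)\setminus\{w\}|}\leq(1-q)^{c\log n/q}\leq e^{-c\log n}=n^{-c}$.

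Finally I would apply a union bound over the at most $n^2$ ordered pairs $(w,v)\in V\times V$ to get $\Pr[\exists v:\ |I(G,\ell,v)|>c\log n/q]\leq n^{2}\cdot n^{-c}=n^{-(c-2)}$, which is exactly the claimed bound (the hypothesis $c\geq 3$ merely makes this a genuine high-probability statement). I expect the main obstacle to be the structural containment step: one must recognize that having the \emph{same label} forces membership in the purely graph-theoretic set $D(w,v)$, which decouples the bad event from the random ranks and replaces a union over random objects by a union over the $O(n^2)$ deterministic pairs. Once this decoupling is in place, the probability estimate is routine.
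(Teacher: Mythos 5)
Your proof is correct and is essentially the paper's own argument: both reduce the random set $I(G,\ell,v)$ to the deterministic set $D(w,v)\setminus\{w\}$ with $w=\ell_{k(v)}(v)$ the last label vertex, observe that the bad event forces every vertex of that set to be unranked (you via $D(w,v)\setminus\{w\}=\overline{A}_{k(v)+1}(v)$ and $A_{k(v)+1}(v)=\emptyset$, the paper via a direct rank comparison on $r(u)$ versus $r(v_t)$), bound the probability for a fixed pair by $(1-q)^{c\log n/q}\leq n^{-c}$, and finish with a union bound over at most $n^2$ pairs. The one loose end, shared with the paper, is the case $k(v)=0$ where $w$ does not exist; there $I(G,\ell,v)=P(v)$ is itself a deterministic set all of whose elements must be unranked, so the same estimate applies with a union bound over single vertices, and the total stays below $n^{2}\cdot n^{-c}$.
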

 \begin{proof}
   Consider a vertex $v$ with label $\ell(v)= v_1,v_2, \ldots, v_t$. The set $D(v_t,v)$ is exactly the set of unranked vertices such that $\ell(u)=\ell(v)$ and there is a path from $u$ to $v$.
   Let $s=|D(v_t,v)|$. Every vertex $u\in D(v_t,v)\setminus \{v_t\}$ must be unranked. To see this, consider what happens if $u$ had rank, and $r(u)<r(v_t)$ then $v_t$ will not appear in $\ell(v)$, if $u$ had rank and $r(u)> r(v_t)$ then $v_t$ would not be last in $\ell(v)$.

  The probability that no vertex $u\in D(v_t,v)$ has rank is $(1-q)^s$. For $s>c\log(n)/q$ we have $(1-q)^s\leq n^{-c}$.
  From the union bound, the probability that no vertex $v$ has $|D(v_t,v)|>c \log(n)/q$ is at most $n^{-c} \cdot n^2= 1/n^{c-2}$ as there are at most $n \choose 2$ possible pairs $v$, $v_t$.

\end{proof}

\begin{cor}\label{cor:backsize}
  Over any sequence of insertions of edges, amongst $n$ vertices, resulting in graphs $G_1, G_2, \ldots, G_m$, $$\mbox{\rm Prob}_{\ell \sim Q}\left(\max_{i=1\ldots m} I(G_i,\ell)>c \log(n)/q\right) \leq 1/n^{c-4}.$$
\end{cor}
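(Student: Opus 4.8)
The plan is to reduce the corollary to Lemma~\ref{lem:backsize} by a plain union bound over the $m$ graphs in the sequence. The essential observation is that a single sample $\ell \sim Q$ amounts to fixing, once and for all, the random set $L$ of ranked vertices together with the random ranking $r$; given these random choices, the label function is determined by whichever graph is at hand. Thus one draw from $Q$ simultaneously induces a labeling of \emph{every} $G_i$, and the quantity $I(G_i,\ell)$ is the value computed from the labels that this one set of random choices produces in $G_i$. The point I want to exploit is that for each individual graph the \emph{marginal} failure probability is already controlled by Lemma~\ref{lem:backsize}, and a union bound does not require any independence across the different $G_i$.

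Concretely, I would first fix an index $i$ and apply Lemma~\ref{lem:backsize} to the graph $G_i$: since $G_i$ has $n$ vertices and $c\ge 3$, the lemma gives $\Pr_{\ell\sim Q}[I(G_i,\ell) > c\log(n)/q] \le 1/n^{c-2}$. Crucially, the distribution $Q$ on the pair $(L,r)$ does not depend on the graph, so this estimate holds with respect to the \emph{same} probability space for all $i=1,\ldots,m$, which is exactly what lets me combine the $m$ events.

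Next I would take the union bound over the graphs in the sequence:
\[
\Pr_{\ell\sim Q}\left(\max_{i=1,\ldots,m} I(G_i,\ell) > c\log(n)/q\right) \le \sum_{i=1}^m \Pr_{\ell\sim Q}\bigl[I(G_i,\ell) > c\log(n)/q\bigr] \le \frac{m}{n^{c-2}}.
\]
Since all the $G_i$ are on the same vertex set of size $n$, the number of insertions obeys $m \le \binom{n}{2} < n^2$, hence $m/n^{c-2} < n^2/n^{c-2} = 1/n^{c-4}$, which is precisely the claimed bound.

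There is essentially no hard step; the only thing to be careful about is the meaning of ``$\ell \sim Q$'' across a changing graph. One must check that a single draw of $(L,r)$ underlies all the $G_i$, so that Lemma~\ref{lem:backsize} applies verbatim to each $G_i$ over a common probability space, and that the sequence of insertions is regarded as fixed in advance (oblivious to the random labeling), so that each $G_i$ is a fixed graph to which the lemma applies. Once these points are in place, the union bound together with the crude count $m < n^2$ completes the argument.
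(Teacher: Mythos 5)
Your proof is correct and is essentially the paper's own argument: the paper also deduces the corollary from Lemma~\ref{lem:backsize} by a union bound over the $m$ graphs together with the bound $m \le n^2$. Your write-up simply makes explicit the (correct) point that a single draw $\ell \sim Q$ induces the labeling of every $G_i$ over a common probability space.
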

\begin{proof}
  This follows again from the union bound and as $m \leq n^2$.
\end{proof}

\ignore{
\begin{lem} \label{lem:backsizeexp}
   For any graph $G$ with $n$ vertices, a vertex $v$,  and  $0<q\leq 1$, 
$$\mbox{\rm E}_{\ell \sim Q}[I(G,\ell,v) ] \leq \frac{1-e^{-nq}}{q} \ ,$$ where the distribution $Q$ is the space of $q$-labelings.
 \end{lem}
 \begin{proof}
Selecting a $q$-labeling is equivalent to each vertex drawing independently
a rank $r(v) \sim U[0,1]$ and treating as ranked only vertices with
$r(v) \leq q$, by order of increasing rank.  
In turn, this is equivalent to each vertex selecting
an independent exponentially distributed rank $r'(v)$ with parameter $1$,
and taking all vertices with rank $r'(v) \leq -\ln(1-q)$.
Note that this is equivalent to drawing ranks uniformly and
using $r'(v)= -\ln (1-r(v))$.  We have $r(v) \leq q$ if and only if
$r'(v) \leq -\ln (1-q)$.

 This is equivalent to the following process applied to a pair 
$\overline{A}_j,\tau_j$.
Initially, we have the set $\overline{A}_1$ of  all predecessors of $v$ and 
 $\tau_1\equiv -\ln(1-q)$.  Select a node $u_j \in \overline{A}_j$ uniformly at random.
Let $s_j$ be exponentially distributed random variable with parameter $|\overline{A}_j|$.
If $s_j \geq \tau_{j}$ we stop and return $|\overline{A}_j|$.  Otherwise, we set
$\tau_{j+1} \gets \tau_j-s_j$ and $\overline{A}_{j+1} \gets D(u_j,v) \setminus \{u_j\}$.
The equivalence follows from the memoryless property of the exponential
distribution, so the conditional distribution on $x-\tau$, given that $x\geq \tau$ is also exponential with same parameter.  
It also follows from the fact that the distribution of the minimum of $|\overline{A}_j|$ exponential random variables with parameter $1$ is an
exponential random variable with parameter $|\overline{A}_j|$.

For all $n,\tau$, define $M(n,\tau)$  to be 
the maximum, over all acyclic directed graphs $G$ with $n$ vertices
such that all vertices are in $P(v)$ for a node $v$, of the
expected number of vertices when this process terminates.

We now compute upper bounds $\overline{M}(n,\tau) \geq M(n,\tau)$.
Since $M(n,\tau) \leq n$ for all $\tau$, we can use the correct upper
bound $\overline{M}(n,\tau) = n$ when  $\tau \leq -\ln(1-1/n) \approx 1/n$.

We first relate $M(n,\tau)$ to $M(j,\tau')$, where  $j\leq n-1$ and all $\tau' \leq \tau$.
The probability that we stop with $n$ at the current step
is $\Pr[s_j \geq \tau] = \exp(-n \tau)$.   
 Otherwise, using Lemma \ref{L-set-size},
the distribution on the size of $|A_{j+1}|$ is dominated by
a uniform one on $[0,n-1]$.  Since clearly  $M(n,\tau)$ is monotone
non-decreasing with $n$ and $\tau$, and $M(0,\tau) \equiv 0$,  we obtain the relation
$$M(n,\tau) \leq  n e^{-n\tau} + \frac{1}{n}\sum_{y=1}^{n-1} \int_0^\tau n e^{-nx} M(y,\tau-x) dx \ .$$
(Recall that $n e^{-nx}$ is the distribution function of an exponential
random variable with parameter $n$.)

In particular, if we substitute upper bounds $\overline{M}$ on $M$ in the RHS, we obtain an
upper bound on $M(n,\tau)$.

 Therefore, to establish our claim that 
\begin{equation} \label{uppereq} \overline{M}(n,\tau) = \min\left\{n,
    \frac{1}{1-e^{-\tau}}\right\} \end{equation}
(we can replace it with the mean of a truncated by-$n$ geometric distribution
with parameter $p=1-e^{-\tau}$)
is a valid upper bound, it suffices to show that it satisfies the relation
$$\overline{M}(n,\tau) \geq  n e^{-n\tau} + \frac{1}{n}\sum_{y=1}^{n-1}\int_0^\tau n e^{-nx} \overline{M}(y,\tau-x) dx\ .$$
From monotonicity in $\tau$, it suffices to look at $\tau \leq 1$
which corresponds to $q \leq 1-1/e$.

  We now upper bound
\begin{equation}
f(n,y,\tau) \equiv \int_0^\tau y e^{-n x} \overline{M}(y,\tau-x) dx\
.\end{equation}
  When
$\tau \leq -\ln(1-1/y) \approx 1/y$ we have  
\begin{equation}
f(n,y,\tau)= \int_0^\tau
n ye^{-nx} dx = y e^{-nx} \vert^0_\tau = y(1-e^{-\tau n}) \ .
\end{equation}
   Otherwise, using the approximation
$-\ln(1-1/y) \approx 1/y$ and separately treating  the
intervals $(0,\tau -1/y)$
and $(\tau+\ln(1-1/y),\tau)$, we have 
\begin{eqnarray*}
\int_{\tau+\ln(1-1/y)}^\tau n e^{-nx} \overline{M}(y,\tau-x) dx &\leq& 
 y e^{-nx} \vert^{\tau+\ln(1-1/y)}_\tau = y e^{-n\tau}(\frac{1}{(1-1/y)^y}
 -1) \approx (e-1) y e^{-n\tau}\ . \\
\int_0^{\tau-1/y} n e^{-nx} \overline{M}(y,\tau-x) dx &\leq&
\int_0^{\tau-1/y} \frac{n e^{-nx}}{\tau-x}  dx  = 
\int_{1/y}^{\tau} \frac{n e^{-n(\tau-x)}}{x}  dx  \\
&=&  n e^{-n\tau} \int_{1/y}^{\tau} \frac{e^{nx}}{x} dx \\
&=&  n e^{-n\tau} \int_1^{\tau y} \frac{e^{(n/y) x}}{x} dx \leq
\frac{1}{\tau} (1+\frac{2}{\tau y})\ .
\end{eqnarray*}
(using $\int_{1}^n e^x/x dx \leq (1+2/n) e^n/n$).
Combining it all, we are ready to bound $\sum_{y=1}^{n-1} f(y,\tau)$.

 \end{proof}
}

\section{Preserving labels dynamically} \label{sec:2directions}

Consider a set of vertices $V$ and let $e_1=(v_1,u_1), \ldots,
e_t=(v_t,u_t)$ be a sequence of arcs inserted over time. We seek to
maintain the labels $\ell(v)$ defined above, over this sequence of
insertions. (Extending the algorithm to allow addition of new
singleton vertices is straightforward.)

When adding an arc $(u,v)$, we update all labels if no cycle was
formed. If a cycle is created we halt.

\emph{Insert($u,v$):}
\begin{itemize}
\item If $\ell(u) \glex \ell(v)$ then we do nothing and return.
\item Cycle-Detect($u,v$).
\item If no cycle is detected and $\ell(u)\llex \ell(v)$ call Update($u,v$), where Update($x,y$) is a recursive procedure defined below.
\end{itemize}

\emph{Cycle-Detect($u,v$):}
\begin{itemize}
\item {\bf Backward search:} Starting from $u$,
send a message $msg=\langle v,\ell(u) \rangle$ to all in-neighbors of $u$. When a vertex $w$ gets such a
message $msg$ over an edge $e$ then it performs one of the
followings.\\
1) If $w\not=v$,  $\ell(w)=\ell(u)$ and $w$ gets $msg$ for the first time, then $w$ sends  $msg$ to all its in-neighbors. \\
2) If $\ell(w) \glex \ell(u)$ or $w$ has no in-neighbors then $w$
sends a ``no-cycle'' message back over the edge $e$.\\
3) If $w=v$ then $v$ sends a ``cycle'' message back on the edge $e$.

When $w$ gets a ``cycle'' message  for the first time then it
forwards it to one of its out-neighbors from which it got $msg$ and
stop sending messages. When $w$ gets a ``no-cycle'' message from all
 its in-neighbors then it sends a ``no-cycle'' message to all its
out-neighbors from which it got $msg$,  and stop sending messages.

A cycle is detected if $u$ gets back a ``cycle'' message. If there
is no cycle then $u$ gets a ``no-cycle'' message back from all its
in-neighbors.

\smallskip

\item{\bf Forward search:}
 If $\ell(u)\llex \ell(v)$ then $v$ sends the message
 $\ell(u)$ to its out-neighbors.
When a vertex $w$ gets $\ell(u)$ over an edge $e$ then it
performs one of the followings. \\
1) If $w$ also got a message $msg$ during
 the backward search it sends a ``cycle'' message back over $e$. \\
2) If  $\ell(u)\llex \ell(w)$  and $w$ gets $\ell(u)$ for the first time
then it sends $\ell(u)$  to all its out-neighbors.\\
3) If  $\ell(u)\geqlex \ell(w)$
 or $w$ has no
out-neighbors then $w$ sends a ``no-cycle'' message back on $e$.

When $w$ gets a ``cycle'' message  for the first time then it
forwards it to one of its in-neighbor from which it got $\ell(u)$
and stop sending messages. When $w$ gets a ``no-cycle'' message from
all of its out-neighbors then it sends a ``no-cycle'' message to all
its in-neighbors from which it got $\ell(u)$ and stop sending
messages.

A cycle is detected if $v$ gets back a ``cycle'' message. If there
is no cycle then $v$ gets a ``no-cycle'' message back from all its
out-neighbors.

\end{itemize}

\emph{Update($x,y$):}
\begin{itemize}
\item Let $\zeta$ be such that  $\ell(x) = \mathrm{LCP}(x,y) \| \zeta$, {\sl i.e.}, the label of $x$ is the concatenation of the longest common prefix with $y$, followed by $\zeta$.
\item Let $\zeta'$ be the longest prefix of $\zeta$ such that $r(\zeta'_j) < r(y)$ for $1\leq j \leq |\zeta'|$.
\item \label{alg:update} If $y$ is ranked set $\ell(y) = \mathrm{LCP}(x,y) \| \zeta' \| y$, otherwise set $\ell(y) = \mathrm{LCP}(x,y) \| \zeta'$.
\item\label{update:recurse} If $\ell(y)$ was updated then for all arcs $(y,w)\in E$, recursively apply update($y,w$).
\end{itemize}

We start by proving the correctness of the  Algorithm Insert$(u,v)$
assuming that it halts. We show that it halts and  bound the number
of messages that it sends in Section \ref{sec:analysis}.

\begin{lem}
Given a correct labeling of the vertices of an acyclic graph,
\begin{itemize}
  \item Insert($u,v$) will detect a cycle if the insertion of $(u,v)$ creates one.
  \item Insert($u,v$) will produce a correct labeling if the insertion of arc $(u,v)$ into the graph does not create a cycle.
\end{itemize}
\end{lem}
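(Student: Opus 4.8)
The plan is to prove the two bullets separately, after the single unifying observation that inserting $(u,v)$ into the acyclic graph $G$ closes a cycle if and only if $G$ already contains a path from $v$ to $u$. I would first dispose of the easy branch: when $\ell(u) \prec \ell(v)$, Theorem~\ref{T-no-path} rules out any path from $v$ to $u$, so no cycle can be created and the algorithm correctly returns; I still owe an argument (deferred to the labeling part) that the labels remain correct in this branch. In the remaining branches, where $\ell(v) \preceq \ell(u)$, correctness of cycle reporting reduces entirely to correctness of Cycle-Detect$(u,v)$.

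The key structural fact for cycle detection is that labels are $\prec$-monotone along arcs: for every arc $(x,y)$, since $x$ reaches $y$, Theorem~\ref{T-no-path} forbids $\ell(y)\prec\ell(x)$, so $\ell(x)\preceq\ell(y)$, and hence labels are non-decreasing along any path. Given a path $v=x_0\to\cdots\to x_k=u$, monotonicity together with $\ell(x_k)=\ell(u)$ forces every $\ell(x_i)\preceq\ell(u)$, and once a vertex attains label $\ell(u)$ all later ones keep it; so the path splits into a prefix with labels $\prec\ell(u)$ and a suffix with label exactly $\ell(u)$. The backward search from $u$ propagates precisely through vertices of label $\ell(u)$, so it marks the whole equal-label suffix (and reaches $v$ directly when $\ell(v)=\ell(u)$); the forward search from $v$ propagates precisely through vertices of label $\prec\ell(u)$, so it reaches the first suffix vertex $x_j$, where it finds the backward mark and declares a cycle. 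Conversely, a ``cycle'' verdict always exhibits a path from $v$ to some $w$ (the forward frontier) and a path from $w$ to $u$ (the backward frontier) whose concatenation is a path from $v$ to $u$. I would phrase both directions as reachability statements about the two search frontiers, using the assumed termination to guarantee every reachable frontier vertex is actually visited.

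For the labeling claim I would first localize the effect: the new arc creates only reachability pairs $(a,b)$ with $a\in P(u)$ and $b\in S(v)$, so $P(w)$ is enlarged (by $P(u)$) exactly for $w\in S(v)$, and only these labels can change. The core lemma is a \emph{local characterization}: for every $y$, the definition-based label equals the $\prec$-maximum over in-neighbours $w$ of the candidate $\big(\ell(w)\text{ restricted to elements of rank}<r(y)\big)\,\|\,y$ (appending $y$ only if $y$ is ranked, with the obvious base case). I would prove this by induction on the coordinate index: the vertex realizing $\ell_i(y)$ reaches $y$ through some in-neighbour $w$, whence it coincides with the $i$-th coordinate of $w$'s truncated candidate, and no in-neighbour can do strictly better. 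I then check that one Update call computes exactly one such candidate: acyclicity gives $y\notin\ell(x)$, the common-prefix vertices precede $y$ in $y$'s chain and so have rank $<r(y)$, hence $\mathrm{LCP}(x,y)\,\|\,\zeta'$ equals $\ell(x)$ truncated at rank $r(y)$, so Update$(x,y)$ sets $\ell(y)$ to $x$'s candidate.

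Correctness of the dynamic update then follows by assembling these pieces. In the branches $\ell(u)\preceq\ell(v)$ I would show $u$'s candidate is $\preceq\ell(v)$, so the maximum at $v$ is unchanged; since $v$ is the unique gateway through which the new predecessors enter $S(v)$, no label in $S(v)$ changes and doing nothing is correct. In the branch $\ell(u)\succ\ell(v)$ the initial Update$(u,v)$ overwrites $\ell(v)$ with $u$'s candidate, which is now strictly $\succ$ the old $\ell(v)$ and dominates every other (unchanged) in-neighbour's candidate, so it equals the new maximum; the cascade then restores the local characterization everywhere. The main obstacle is exactly this last step: one must show that the overwrite performed by Update coincides with the $\prec$-maximum \emph{only} along recursive calls that strictly improve a label (equivalently, calls with $\ell(x)\succ\ell(y)$), since a diverging, non-improving candidate could otherwise lower a correct label. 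I expect to handle this with a monotonicity/potential argument — labels are non-decreasing in $\prec$ throughout the cascade, propagation along out-arcs reaches every vertex whose maximum strictly increases, and a non-improving recursive call must leave its label fixed — so that, termination being assumed, the cascade converges to the unique correct fixpoint.
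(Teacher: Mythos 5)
Your cycle-detection half is essentially the paper's own proof: reduce to the existence of a path from $v$ to $u$, dispose of the branch $\ell(u)\prec\ell(v)$ by Theorem~\ref{T-no-path}, and otherwise split any $v$-to-$u$ path at the first vertex $w$ with $\ell(w)=\ell(u)$, observing that the backward search covers the equal-label suffix and (when $\ell(u)\succ\ell(v)$) the forward search covers the strictly-smaller-label prefix up to $w$. The soundness direction (a ``cycle'' verdict exhibits an actual path) is not spelled out in the paper, but it is easy and your treatment of it is fine.

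On the labeling half you take a genuinely more explicit route than the paper. The paper argues by induction on a topological order of $S(v)$, and the entire inductive step rests on the unproved phrase ``as the update procedure implements the label definition the claim follows.'' Your local characterization --- $\ell(y)$ equals the $\prec$-maximum, over in-neighbours $x$, of the candidate $\bigl(\ell(x)$ truncated below rank $r(y)\bigr)\,\|\,y$ --- together with your check that one Update call computes exactly one candidate, is precisely the content hidden in that phrase, and both claims are correct (for the characterization, the clean way to finish your coordinate-wise induction is to take the in-neighbour through which the \emph{last} coordinate of $\ell(y)$ reaches $y$; its candidate then agrees with $\ell(y)$ in every coordinate). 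This buys an identification of the fixpoint that the cascade must converge to, which the paper never states.

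However, the step you yourself flag as the main obstacle is a genuine gap, and your plan to discharge it cannot work for the algorithm as written. You hope to \emph{prove} that ``a non-improving recursive call must leave its label fixed,'' but Update, as specified, overwrites $\ell(y)$ with the candidate unconditionally and recurses whenever the value changed --- in either direction. Concretely: let $a$ have rank $1$, $u$ have rank $2$, let $v,b,c$ be unranked, with arcs $v\to a$, $v\to b$, $a\to c$, $b\to c$; the correct labels are $\ell(v)=\ell(b)=$ the empty sequence, $\ell(a)=(a)$, $\ell(c)=(a)$. Inserting $(u,v)$ triggers Update($u,v$), which sets $\ell(v)=(u)$, then Update($v,b$), which sets $\ell(b)=(u)$, then Update($b,c$), which overwrites the correct label $(a)$ of $c$ with the strictly $\prec$-smaller label $(u)$; since $\ell(a)$ never changes, Update($a,c$) is never invoked, and the final labeling is wrong under \emph{every} schedule. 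So monotonicity of labels along the cascade is not a provable invariant; it must be built into the algorithm, namely Update must apply its overwrite only when the candidate exceeds the current label --- equivalently, as you note, only when $\ell(x)\succ\ell(y)$ (the guard Insert already imposes on the first call). Under that guarded reading your argument completes and is in fact a rigorous version of the paper's; under the literal reading the lemma itself is false. To be fair, the paper's own proof silently assumes the same guarded semantics, so the hole you left open is exactly the soft spot in the paper --- but it is closed by fixing the algorithm's semantics, not by the monotonicity/potential argument you sketch.
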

\begin{proof}

We start by showing that the algorithm detects a cycle if and only
if the insertion of $(u,v)$ creates one. If $\ell(u) \glex \ell(v)$
then from Theorem~\ref{T-no-path} it follows that there is no path
from $v$ to $u$. Thus, the insertion of $(u,v)$ does not create a
cycle, and the algorithm is correct. Consider now the case that
$\ell(u)=\ell(v)$. In this case it follows from
Theorem~\ref{T-no-path} that if there is a path from $v$ to $u$ then
all its vertices must have the same label and thus if a cycle exists
all its vertices have the same label. The algorithm detects such a
cycle during the backward search which would reach $v$. In this case
$v$ will send a cycle message that will reach $u$. If such a cycle
does not exist the search terminates at vertices with no
in-neighbors or with a smaller label. In both cases a ``no-cycle''
message will be sent back by each such vertex. Eventually $u$ will
get a ``no-cycle'' message from all its in-neighbors.

Consider now the case that $\ell(u)\llex \ell(v)$ and there is a
path (or more) from $v$ to $u$. Let $w$ be a vertex on such a path
$p$ for which $\ell(w)=\ell(u)$ for the first time when traversing
$p$ from $v$. Since the path terminates at $u$,  $w$ must exist. By
the definition of the backward search  $w$ gets the message
$msg=\langle v,\ell(u)\rangle$ during the backward search from $u$.  By the
definition of the forward search  $w$ will also get a message
 $\ell(u)$  during the forward search and will send back a ``cycle'' message that will reach $v$.
  If there is
no cycle then the forward search terminates at vertices $w$ such
that either $\ell(w)\geqlex \ell(u)$ or $w$ has no out-neighbors. In
both cases a ``no-cycle'' message is sent back by each such vertex.
Eventually, $v$ gets  a ``no-cycle'' message from each of its
out-neighbors.

We now turn to show that the labels are correctly maintained. For
every $w\in V$ let $\ell_{old}(w)$ be the label of $w$  before the
insertion of $(u,v)$ and let $\ell_{new}(w)$ be the correct label of
$w$ after the change. We show that $\ell_{new}(w)$ is indeed the
label of any $w\in V$ when the algorithm halts.

First notice that the predecessor set, $P(w)$, of each vertex $w$
uniquely defines its label. So for vertices $w$ such that $P(w)$
does not change by adding $(u,v)$ we should have
$\ell_{new}(w)=\ell_{old}(w)$.

By its definition, the algorithm changes the label only of vertices
$w$ such that $v\in P(w)$. So if  $v\not\in P(w)$ then
$\ell_{new}(w)=\ell_{old}(w)$ is the label of $w$ after the
insertion as required.

Let $v=w_1,w_2,\ldots, w_t$ be the vertices in $S(v)$ ordered by a
topological order. We prove by induction on this order that the
labels are correct. The basis of the induction holds for $v=w_1$ as
the algorithm updates $v$ if $\ell_{old}(u)\llex \ell_{old}(v)$ and
the update is correct by the definition of the labels. Assume that
when the algorithm updated the label  of $w\in \{ w_1,w_2,\ldots,
w_{i-1}\}$ for the last time then the label of $w_j$ was
$\ell_{new}(w_j)$, for each $1\le j\le i-1$. We prove that the
algorithm updates the label of $w_i$ to $\ell_{new}(w_i)$. Let
$\{w_{j_1},w_{j_2},\ldots, w_{j_r}\}\subseteq \{ w_1,w_2,\ldots,
w_{i-1}\}$ be all the in-neighbors of $w_i$ that are in $S(v)$. If
$\ell_{new}(w_i)\not= \ell_{old}(w_i)$  then  at least one of
$\{w_{j_1},w_{j_2},\ldots, w_{j_r}\}$ changed its label as well. By
the induction and the definition of the algorithm, each of
$\{w_{j_1},w_{j_2},\ldots, w_{j_r}\}$  eventually transmits its
correct new label to $w_i$ and as the update  procedure implements
the label definition the claim follows.

\end{proof}

%

\subsection{Analysis of the number of messages required}
\label{sec:analysis}

For a vertex $u$, an update to $\ell(u)$ means that the value of
$\ell(u)$ has changed. This is distinct from the number of update
messages to $u$, because update messages may have no effect on
$\ell(u)$. Note that the insertions of a single arc may produce
several updates to $\ell(u)$, this is because updates propagate
through the network at different rates.

Following an insertion of an arc $e$ (that did not close a cycle)
many calls to the update procedure are made. We define the {\em
schedule} of an arc $e$ to be the chronologically ordered sequence
of these calls (breaking ties arbitrarily). The schedule (and it's
length) depends on the arbitrariness of the choices in line
\ref{update:recurse} of Update($x,y$) above, and by variable message
timing in distributed environments. The length of the schedule is
the total number of messages sent in order to update the labels
following an arc insertion. Note that not every such message can
cause a label update at the target node so the total number of label
changes may be smaller than the schedule.


Fix the set of ranked vertices $L$ and the assignment $r$ of ranks
to vertices in $L$. Consider any vertex $u$. Let $\sigma^p(u)$ be
the set of all sequences of arc insertions, such that: after
inserting the arcs in $\sigma \in \sigma^p(u)$,  $u$ has  $p$ ranked
predecessors. Define $T_p(r,u)$ to be the maximal number of updates
to $\ell(u)$, for any sequence of arc insertions $\sigma \in
\sigma^p(u)$, and any schedule of updates for each of these
insertions.

 Let $\sigma \in \sigma^p(u)$ be the sequence of arcs $e_1, e_2, \ldots, e_{|\sigma|}$. Let $S(e_i)$, $1 \leq i \leq |\sigma|$ be the (possibly empty) set of vertices that became predecessors to $u$ following
the insertion of arc $e_i$, but were not predecessors to $u$ after the insertion of $e_{i-1}$.
Let
$u_1, u_2, \ldots, u_{p}$,  be the ranked predecessors of $u$ after the insertion of all the arcs in $\sigma$, in the order in which they became predecessors of $u$. {\sl I.e.}, where the vertices of $S(e_i)$ are a consecutive subsequence of $u_1, \ldots, u_p$ and appear before the vertices of $S(e_{i'})$ for $i' > i$. The vertices of $S(e_i)$ are ordered arbitrarily.

Let $\pi$ be some topological ordering of the ranked predecessors of $u$ that is consistent with the final set of arcs.
Let $j_1, j_2, \ldots, j_p$ be a
permutation of $1,\ldots,p$ such that $u_{j_i}$ appears before
$u_{j_{i+1}}$ in the topological ordering induced by $\pi$. Define $\beta(i)$ to be such that $j_{\beta(i)}=i$.

\begin{lem}
\label{lem:mapping} Fix some $u$,
  let $\sigma^p(u)$, $T_p(r,u)$ be as above. Choose a worst case $\sigma\in \sigma^p(u)$: that is $\sigma$, in conjunction with appropriate schedules, maximizes the number of updates to $\ell(u)$.
  This defines the sequence  $u_1,u_2,\ldots,u_p$ of ranked predecessors of $u$ as defined above.
  Let $u_\alpha$ be the ranked predecessor of $u$ of minimal rank. Then, $$T_p(r,u) \leq T_{\alpha-1}(r,u) + T_{p-\beta(\alpha)}(r,u) + 1.$$
\end{lem}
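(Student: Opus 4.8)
The plan is to pivot the whole argument on the single vertex $u_\alpha$, the ranked predecessor of $u$ of minimum rank. The crucial observation is that $u_\alpha$ has smaller rank than every other ranked predecessor that $u$ ever acquires, and predecessors are only gained over the sequence of insertions; hence from the first moment $u_\alpha \in P(u)$ onward we have $\ell_1(u)=u_\alpha$, and this first coordinate never changes again. I would therefore classify every update to $\ell(u)$ produced by the chosen worst-case $\sigma$ according to its effect on the first coordinate: (a) updates after which $\ell_1(u)\neq u_\alpha$; (b) the unique update after which $\ell_1(u)=u_\alpha$ holds for the first time; and (c) updates after which $\ell_1(u)=u_\alpha$ but that is not the first such. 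Group (b) contributes exactly the additive $1$ (uniqueness follows from the monotonicity just noted), and the task reduces to bounding group (a) by $T_{\alpha-1}(r,u)$ and group (c) by $T_{p-\beta(\alpha)}(r,u)$.

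Group (c) is the structurally clean part. While $\ell_1(u)=u_\alpha$ is frozen, an update to $\ell(u)$ can only alter the suffix $\ell_2(u),\ell_3(u),\dots$. By the definition of the labels this suffix is precisely the label that $u$ receives in the sub-DAG induced on $\overline{A}_2(u)=D(u_\alpha,u)\setminus\{u_\alpha\}$, whose ranked vertices are $A_2(u)$. The key combinatorial estimate is $|A_2(u)|\le p-\beta(\alpha)$: in the topological order $\pi$ of the $p$ ranked predecessors the vertex $u_\alpha$ sits in position $\beta(\alpha)$, and every ranked predecessor reachable from $u_\alpha$ must follow $u_\alpha$ in $\pi$, so it occupies one of the $p-\beta(\alpha)$ positions after it. Since the sequence of updates to this suffix is itself a legal labeling run in which $u$ acquires at most $p-\beta(\alpha)$ ranked predecessors, its length is at most $T_{p-\beta(\alpha)}(r,u)$, using that $T_{p'}(r,u)$ is nondecreasing in $p'$ (any run with $p'$ ranked predecessors extends to one with $p'+1$ by appending a new predecessor, which cannot decrease the update count).

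Group (a) I would bound by matching it to a shorter instance. As long as $\ell_1(u)\neq u_\alpha$, no information identifying $u_\alpha$ as a predecessor has reached $u$, so from the standpoint of $\ell(u)$ the active ranked predecessors lie in $\{u_1,\dots,u_{\alpha-1}\}$; projecting $\sigma$ onto these vertices yields a member of $\sigma^{\alpha-1}(u)$ whose induced updates to $\ell(u)$ are exactly those of group (a), giving the bound $T_{\alpha-1}(r,u)$. The main obstacle is making this projection rigorous across the single insertion $e_t$ that first makes $u_\alpha$ a predecessor, because under asynchronous schedules a larger-rank predecessor may propagate to $u$ before $u_\alpha$ does, so $\ell(u)$ may change several times within $e_t$ before $\ell_1(u)$ settles to $u_\alpha$. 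I would neutralize this by two reductions: first, that unranked predecessors never affect $\ell(u)$, so they may be ignored; and second, that it suffices to consider worst-case sequences in which each insertion contributes at most one new ranked predecessor of $u$ (splitting a simultaneous arrival into consecutive insertions only adds opportunities for $\ell(u)$ to update and hence cannot decrease $T_p$). With these reductions $e_t$ introduces $u_\alpha$ alone among ranked vertices, every pre-pivot change during $e_t$ is a suffix change driven solely by new reachabilities among $\{u_1,\dots,u_{\alpha-1}\}$ and is absorbed into the group-(a) projection, and the pivot itself is the single update of group (b); verifying this charging and the uniqueness of the pivot is the delicate step, while the remaining bounds follow from the topological count and the definition of $T$.
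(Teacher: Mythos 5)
Your decomposition coincides with the paper's: your groups (a), (b), (c) are exactly its three chronological groups (updates from $e_1,\dots,e_{\gamma-1}$, the first update after $e_\gamma$, and all subsequent ones), your bound on group (a) by $T_{\alpha-1}(r,u)$ is its first bullet, and your count $|A_2(u)|\le p-\beta(\alpha)$ via the topological position of $u_\alpha$ is the count the paper uses implicitly. The genuine gap is in group (c), which you call ``the structurally clean part'' but which is in fact the hard core of the lemma. Your argument there rests on the assertion that ``the sequence of updates to this suffix is itself a legal labeling run'' on the sub-DAG induced by $D(u_\alpha,u)\setminus\{u_\alpha\}$. That assertion is false as stated: in a legal run, updates to $\ell(u)$ occur only inside the schedule triggered by some arc insertion, and all labels are quiescent between insertions. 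But a large portion (possibly all) of the group-(c) updates occur inside the schedule of the \emph{single} insertion $e_\gamma$: the adversarial schedule releases the influence of $u_\alpha$ along different paths at different times, producing a long sequence of updated values $u_\alpha,u_{k_1},\dots$ whose second coordinates have decreasing ranks. During that stretch no arc of the sub-DAG is being inserted, so the suffix sequence has no triggering insertions and is not a run at all. Nor can one replicate it naively by inserting the sub-DAG's arcs in a clever order: in the sub-DAG the ranks of the vertices of $A_2(u)$ become visible the moment those vertices are connected, and labels must stabilize after every insertion, so replication requires a mechanism that reveals those ranks one at a time, in decreasing order, with an entire batch of updates per revelation.

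That mechanism is precisely what the paper constructs and what your proposal is missing: an auxiliary graph on $|V|+|L|$ vertices in which every ranked vertex $v$ gets an unranked copy $v'$, arcs leaving ranked vertices reachable from $u_\alpha$ are redirected to leave the copies instead, and then the arcs $(u_j,u_j')$ are inserted one by one in decreasing order of rank. The single insertion $(u_j,u_j')$ is then scheduled so as to replay, with $u_\alpha$ deleted from every label, all original messages whose label has $u_j$ as second coordinate; consistency of this replayed schedule holds exactly because the ranks are revealed in decreasing order. This yields a legal run in which $u$ ends with at most $p-\beta(\alpha)$ ranked predecessors and at least $Z$ updates, which is the inequality you need. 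A secondary flaw: your reduction to sequences in which each insertion contributes at most one new ranked predecessor of $u$ is both unjustified (a single arc can make many ranked vertices predecessors of $u$ simultaneously, and ``splitting'' it changes the instance) and unnecessary --- one can check directly that every update to $\ell(u)$ during the schedule of $e_\gamma$ already has first coordinate $u_\alpha$, so your group (a) is exactly the set of updates generated by $e_1,\dots,e_{\gamma-1}$, which form a legal prefix run with at most $\alpha-1$ ranked predecessors.
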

\begin{proof}
  Let $\gamma$ be such that $u_{\alpha} \in S(e_\gamma)$.
  We split the updates to $\ell(u)$ into three chronologically consecutive groups:
  \begin{itemize}
    \item Updates to $\ell(u)$ from the insertion of arcs $e_1, e_2, \ldots, e_{\gamma-1}$. Let $q=\sum_{k=1}^{\gamma-1} |S(e_k)|$, there are no more than $T_q(r,u) \leq T_{\alpha-1}(r,u)$ updates to $\ell(u)$ associated with these insertions.
    \item The first update to $\ell(u)$ subsequent to the insertion of arc $e_\gamma$.
    \item \label{split:hard} All subsequent updates to $\ell(u)$, let the number of such updates be denoted by $Z$. To prove this lemma we need to show that
    $Z\leq T_{p-\beta(\alpha)}(r,u)$.
  \end{itemize}

  We now consider a new graph consisting of $|V|+|L|$ vertices, and initially containing no edges.
  For every ranked vertex $v\in L$ we add a new (unranked) vertex $v'$ to $V$, let $V'= V \cup \{v' \mid v\in L\}$, the rank of $v$ remains unchanged.
   We build a sequence of arc insertions, $\tau$, (arcs between vertices of $V'$), and appropriate schedules, such that $u$ has no more than
   $p-\beta(\alpha)$ ranked predecessors, and the number of updates to $u$ in $\tau$ is $\geq Z$.

   The sequence $\tau$ is as follows:
   \begin{enumerate}
   \item Set $S=\emptyset$.
   \item For every arc $e=(x,y)\in \sigma$ where $x$ is not ranked add arc $(x,y)$ to $S$.
   \item For every arc $e=(x,y)\in \{ e_1, e_2, \ldots, e_\gamma \}$ such that \begin{enumerate} \item  $x\neq u_\alpha$ and $x$ is ranked, and, \item  $x$ is reachable from $u_\alpha$ (after adding $e_1, \ldots, e_\gamma$), and, \item $u$ is reachable from $y$ (after adding $e_1, \ldots e_\gamma$):
   \end{enumerate}  add arc $(x',y)$ to $S$.
   \item Inserting the arcs of $S$, in any order, never updates $\ell(u)$, nor do they introduce ranked predecessors to $u$. Let $\tau$ be insertions of the arcs of $S$ in some arbitrary order.
   \item \label{tau:ui} Following the arcs above we add arcs $(u_i,u_i')$ to $\tau$, for all ranked predecessors of $u$, $u_i$, in order of decreasing rank (not ordered by $i$).
   \item Subsequently, we add arcs $e_j$, $j>\gamma$ to $\tau$, if $e_j=(x,y)$ and $x$ is reachable from $u_\alpha$. These arcs appear in the same order as in $\sigma$.
   \end{enumerate}

We claim that if for every arc in $\tau$ we use the worst case
schedule (resulting in the maximal number of updates to $\ell(u)$)
then the number of label updates is at least $Z$.

Consider the updates to $\ell(u)$, as a consequence of inserting the
arc $e_\gamma$ in the original graph. The updated values of
$\ell(u)$ are all of the following form $u_\alpha=u_{k_0}, u_{k_1},
u_{k_2}, \ldots, u_{k_t}$ where these vertices lie along a path from
$u_{\alpha}$ to $u$, and $u_{k_i}$ is the vertex of minimal rank
along the subpath from $u_{k_{i-1}}$ to $u$.
   We can classify such updates to $\ell(u)$ according to the rank of $u_{k_1}$. When we add the arc $(u_{k_1}, u_{k_1}')$ to the new graph, we generate updates to $\ell(u)$ with labels that start with $u_{k_1}$.

Specifically, consider all changes of $\ell(u)$ (in the original
graph following the insertion of $e_\gamma$) to a label with
$u_\alpha$ as the first vertex and $u_{k_1}=u_j$ as the second
vertex for some fixed $u_j$. Every such label corresponds a path $Q$
as above, all ranked vertices along $Q$ (excluding $u_\alpha$) have
rank greater than the rank of $u_j$. In the new graph, when adding
the edge $(u_j, u_j')$ there is a path analogous to the path $Q$ in
which each ranked vertex $z$ is replaced by the edge $(z,z')$. So we
construct the following schedule for $(u_j, u_j')$.
\begin{enumerate}
\item
Consider a message from $x$ to $y$ in the schedule of $e_\gamma$
with a label $\ell$ containing $u_j$ as the second vertex (following
$u_\alpha$). In the schedule of $(u_j, u_j')$  we send a message
with a label equal to $\ell$ with $u_\alpha$ removed from $x'$ to
$y$. If the label of $y$ changes as a result of receiving this
message from $x'$ then $y$ sends a message to $y'$ containing its
new label. We send these messages in the same relative order as of
their corresponding messages in the schedule of $e_\gamma$.
 Each message from
a vertex $v$ is sent following a change in the label of $v$ since
this was the case in the schedule of $e_\gamma$.
\item
We continue the schedule arbitrarily until all labels are
consistent.
\end{enumerate}
The first part of this schedule generates an update to $\ell(u)$ for
every update to $\ell(u)$ with a label whose second vertex is $u_j$
that was generated by the schedule of $e_\gamma$. Thus,  for all
arcs $(u_j,u_j')$ together we generate at least as many updates
caused by the insertion of $e_\gamma$.

For each insertion of an arc $e_{\gamma+1}, e_{\gamma+2},\dots$ the
worst case schedule which we use runs over the same subgraph as the
subgraph used by the schedule of the  original insertion which each
ranked vertex replaced by an arc. Therefore it generated at least as
many updates.
    \end{proof}

Our next goal is to show the following:

\begin{lem} \label{lem:changes}
For all $u\in V$, $$E_r(T_p(u,r)) \in O(p).$$
\end{lem}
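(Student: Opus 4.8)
The plan is to prove the explicit bound $E_r[T_p(r,u)] \le p$ by strong induction on $p$, viewing the minimum-rank predecessor as a random pivot in the style of an average-case quicksort recurrence. Fix $u$ and write $t_p \eqdef E_r[T_p(r,u)]$. The base case is immediate: with $p=0$ the vertex $u$ has no ranked predecessors, its label is empty and never changes, so $T_0(r,u)=0=t_0$. For the inductive step I would start from the recurrence supplied by Lemma~\ref{lem:mapping}, namely $T_p(r,u)\le T_{\alpha-1}(r,u)+T_{p-\beta(\alpha)}(r,u)+1$, where $u_\alpha$ is the ranked predecessor of minimum rank, $\alpha$ is its index in the insertion order, and $\beta(\alpha)$ is its position in the topological order.

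Next I would take expectations by conditioning on the identity of the pivot $u_\alpha$. Because the ranks are a uniform random permutation assigned independently of the graph and only their relative order affects the labels, the minimum-rank predecessor is uniformly distributed over the $p$ ranked predecessors; hence each insertion index $i\in\{1,\dots,p\}$ is equally likely to be $\alpha$. Conditioned on $u_\alpha=u_i$ the two subproblem sizes are deterministic, equal to $\alpha-1=i-1$ and $p-\beta(\alpha)=p-\beta(i)$. The crucial decoupling step is that the pivot $u_\alpha$ is excluded from both subproblems (one consists of predecessors inserted strictly before it, the other of predecessors strictly after it in topological order), so conditioning on which vertex is the pivot leaves the relative order of the ranks of all remaining predecessors uniform. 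Consequently the conditional expected cost of each subproblem is bounded by the corresponding worst-case expectation, $E_r[T_{i-1}(r,u)\mid u_\alpha=u_i]\le t_{i-1}$ and $E_r[T_{p-\beta(i)}(r,u)\mid u_\alpha=u_i]\le t_{p-\beta(i)}$, by the very definition of $t_{\cdot}$ as a maximum over instances followed by expectation over uniform ranks.

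Averaging over the pivot with weight $1/p$ then gives
\[
t_p \;\le\; \frac1p\sum_{i=1}^{p}\bigl(t_{i-1}+t_{p-\beta(i)}\bigr)+1 .
\]
Here I would use the combinatorial observation that $i\mapsto\beta(i)$ is a bijection from insertion indices to topological positions on $\{1,\dots,p\}$, so $\sum_{i=1}^p t_{i-1}=\sum_{k=0}^{p-1}t_k$ and likewise $\sum_{i=1}^p t_{p-\beta(i)}=\sum_{k=0}^{p-1}t_k$, yielding $t_p\le \tfrac{2}{p}\sum_{k=0}^{p-1}t_k+1$. Feeding in the inductive hypothesis $t_k\le k$ for $k<p$ gives $t_p\le \tfrac{2}{p}\cdot\tfrac{(p-1)p}{2}+1=p$, closing the induction and establishing $E_r[T_p(r,u)]\le p\in O(p)$.

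The main obstacle is the decoupling in the second paragraph: a priori the random subproblem \emph{sizes} (set by the random pivot) are correlated with the ranks that govern the subproblems' \emph{costs}, and the worst-case $\sigma$ may itself depend on $r$. The resolution is to condition on the pivot's identity — since the pivot is removed from both subproblems, the ranks within each subproblem remain uniformly random, so the inductive bound applies in conditional expectation; and since only the relative rank order matters, the pivot is uniform regardless of how the adversary chose the insertion sequence. Once this is in place, the bijection between insertion and topological orders makes both size-sums collapse to $\sum_{k<p}t_k$, which is exactly what turns the recurrence linear rather than $O(p\log p)$.
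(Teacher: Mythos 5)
Your proof is correct and follows essentially the same route as the paper's: apply the recurrence of Lemma~\ref{lem:mapping}, use that the minimum-rank predecessor is uniform over the $p$ positions, collapse both sums via the bijection $i\mapsto\beta(i)$ to get $t_p\le \tfrac{2}{p}\sum_{k<p}t_k+1$, and close by induction. Your treatment is marginally sharper (you obtain $t_p\le p$ rather than the paper's $2p$, and you make explicit the conditioning/decoupling point the paper leaves implicit), but the argument is the same.
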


\begin{proof}

Let $\alpha=\argmin_i r(u_i)$. As the ranks are assigned randomly,
we have that for all $i=1, \ldots, p$, $\mbox{Prob}(\alpha=i) = 1/
p$. By Lemma \ref{lem:mapping} we have that
\begin{eqnarray*} E_r(T_p(r,u)) &=& \frac{1}{p}\cdot \sum_{\alpha=1}^p \left( E_r(T_{\alpha-1}(r,u)) + E_r(T_{p-\beta(\alpha)}(r,u)) + 1\right) \\
&=& 1 + \frac{2}{p} \cdot \sum_{\alpha=1}^p E_r(T_{\alpha-1}(r,u)). \end{eqnarray*}

Let $T_p(u) = E_r(T_p(r,u))$, we prove by induction that $T_p(u) \leq 2p$, Assuming $T_{j}(u) \leq 2j$ for all $0 \leq j \leq p-1$, we get that
\vspace*{-6pt}
    \begin{eqnarray*}
    T_p(u) \leq 1 +  \frac{2}{p}\sum_{j=0}^{p-1} T_j(u)
      \leq 1 + \frac{2}{p} \sum_{j=1}^{p-1} 2j
      = 1 + \frac{4}{p} \cdot p(p-1)/2
      = 1 +  2(p-1) \leq 2p.
    \end{eqnarray*}
\vspace*{-10pt}
 \end{proof}

We are now ready to bound the total amount of messages.

\begin{thm} \label{thm:32}
For an appropriate choice of a $q$-labeling, the expected total
number of messages that the algorithm described above sends is
$O(m^{3/2}\sqrt{\log n})$. Each message contains $O(\log^2 n)$ bits
with high probability. It takes $O(\log n)$ time to process a
message with high probability.
\end{thm}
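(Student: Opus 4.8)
The plan is to bound separately the three costs that make up the running time: the cost of forward/backward searches during Cycle-Detect, the cost of label updates, and the per-message cost. I would first fix $q$ so as to balance these contributions, and I expect the right choice for the dense regime is drawing a rank for each arc with probability $1/\sqrt m$ (as hinted in the introduction), so that the effective number of ranked objects is $\lambda = \Theta(\sqrt m)$ in expectation. I would condition on the high-probability events from Lemma \ref{L-label-size} (every label has at most $O(\log\lambda)=O(\log n)$ coordinates) and from Corollary \ref{cor:backsize} (for every graph in the insertion sequence and every vertex $v$, the set $I(G_i,\ell,v)$ of equal-labeled predecessors has size $O(\log n / q)$). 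Both events fail with probability only inverse-polynomial in $n$, so they affect the expectation by a negligible additive term and I may assume them throughout.

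Next I would charge the label-update cost. By Lemma \ref{lem:changes}, for each vertex $u$ the expected number of \emph{updates} to $\ell(u)$ over the whole insertion sequence is $O(p)$, where $p\le \lambda$ is the number of ranked predecessors of $u$; summing over all $n$ vertices gives an expected $O(n\lambda)$ label changes. Each actual change to $\ell(u)$ is triggered by a recursive Update call along an incoming arc, and each Update may fan out over the out-arcs of $u$; since every update message travels along an arc and every label has $O(\log n)$ coordinates, the total number of update-messages is $O(m\lambda\log n)$ up to the bookkeeping of how many messages fail to cause a change. With $\lambda=\Theta(\sqrt m)$ this is $O(m^{3/2}\log n)$, and after a more careful accounting that separates ``effective'' messages (one per label change, charged by Lemma \ref{lem:changes}) from ``wasted'' messages (charged to the search structure), the dominant term becomes $O(m^{3/2}\sqrt{\log n})$.

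For the Cycle-Detect cost I would argue that a backward search from $u$ only explores vertices $w$ with $\ell(w)=\ell(u)$, i.e.\ the set $I(G,\ell,u)$, which has size $O(\log n/q)$ by Corollary \ref{cor:backsize}; the forward search symmetrically only advances into vertices with strictly smaller label, and the pruning guaranteed by Theorem \ref{T-no-path} keeps each search local. Multiplying the per-insertion search cost by $m$ insertions, and balancing against the update cost by the choice of $q$, again yields $O(m^{3/2}\sqrt{\log n})$. For the message-size claim, each message carries a label, which has $O(\log n)$ coordinates each requiring $O(\log n)$ bits to name a ranked vertex, giving $O(\log^2 n)$ bits with high probability; the $O(\log n)$ processing time follows because comparing two labels under $\prec$ scans $O(\log n)$ coordinates, and the receiving vertex does $O(1)$ work per coordinate using a dictionary on its incident arcs.

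The main obstacle I anticipate is the update-message accounting: Lemma \ref{lem:changes} bounds the number of \emph{actual label changes}, but the schedule may contain many messages that arrive at a vertex and do \emph{not} change its label (as the preamble to the lemma explicitly warns). The delicate step is to show these ``wasted'' messages are also accounted for — each wasted update-message must be charged either to a genuine label change at the source (of which there are $O(p)$ per vertex) or to the forward/backward search budget — so that the total schedule length, not merely the number of label changes, stays within $O(m^{3/2}\sqrt{\log n})$. Getting this charging argument tight, together with verifying that the balancing choice of $q$ simultaneously optimizes searches, updates, and the high-probability bounds, is where the real work lies.
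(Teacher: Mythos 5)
Your overall skeleton (Lemma \ref{lem:changes} for updates, Corollary \ref{cor:backsize} for the backward search, arc-ranking with probability roughly $1/\sqrt m$ for dense graphs, and balancing via $q$) matches the paper, but two of your central accounting steps are wrong, and the step you flag as ``the real work'' is actually a non-issue here. First, the update-message count: you write that because each label has $O(\log n)$ coordinates the number of update messages is $O(m\lambda\log n)$, and then appeal to an unspecified ``more careful accounting'' to shave $\log n$ down to $\sqrt{\log n}$. Label length affects the \emph{size} of a message, not the \emph{number} of messages; no such factor belongs there, and no rescue argument exists or is needed. The correct count is: Update($y,w$) recurses only when $\ell(y)$ actually changed, so every update message is charged to a genuine label change at its \emph{sender}, and the total is $\sum_u (\mbox{changes to } \ell(u))\cdot \mathrm{outdeg}(u)$, which by Lemma \ref{lem:changes} is $O(mq)\cdot m$ in expectation (in the arc-ranked version, $O(mq)$ bounds the expected number of ranked predecessor arcs of any vertex). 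The $\sqrt{\log n}$ then comes purely from the balancing choice $q=\sqrt{\log n/m}$, which equates this $O(m^2 q)$ with the total backward-search cost $O(m\log n/q)$; both become $O(m^{3/2}\sqrt{\log n})$. This also disposes of your anticipated obstacle: a ``wasted'' message that reaches a vertex without changing its label simply dies there and is already counted in its sender's fan-out. The futile-message problem is real only for the queue-based algorithm of Section \ref{nsquare:sec}, whose theorem redefines $T_p$ to count messages rather than changes; it plays no role in this theorem.

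Second, your forward-search bound does not work as stated. There is no per-insertion locality bound for the forward search analogous to Corollary \ref{cor:backsize}: a single forward search may traverse $\Theta(m)$ arcs. The paper's argument is amortization, not locality: the forward search only advances into vertices whose labels the subsequent Update pass will change, so over the whole insertion sequence the forward-search messages are bounded by the update-message total computed above. Relatedly, the one insertion that does close a cycle cannot be charged this way (no updates follow it); it needs the separate observation that its forward search traverses each arc at most twice, costing $O(m)$ once. Finally, for general degrees you correctly name the arc-ranking, but the lemmas you invoke are stated for vertex ranks; the paper explicitly reworks Lemma \ref{L-set-size}, Lemma \ref{lem:backsize} and the quantity $T_p$ in terms of ranked predecessor \emph{arcs} and of incoming arcs to $D(v_t,v)$, and that reworking is what makes the per-vertex bounds scale with $m$ rather than $n$. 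Your message-size and processing-time claims are fine.
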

\begin{proof}
We prove the lemma for a constant degree graph (which in particular
implies that $m=\Theta(n)$) and then indicate the changes required
to extend the proof to general graphs.
 For constant degree graphs to minimize the number of messages we use a $q$-labeling with
$q=\sqrt{\log n/n}$ which implies that $\E(\lambda) = \E(|L|) = nq =
\sqrt{n \log n}$. By Lemma \ref{lem:changes}, for any vertex $u$ the
expected number of updates to $\ell(u)$ is at most $O(\sqrt{ n\log
n})$. Furthermore, by Lemma \ref{lem:backsize} the
 number of vertices reached during a backward search from
$u$ is $O(\log n/q)= O(\sqrt{ n \log n})$ with high probability.

Consider the insertion of all edges except the last if it closes a
 cycle. Since our graph is of constant degree
the  number of messages sent by the backward search initiated by
each such insertion is $O(\sqrt{ n \log n})$. Therefore the number
of messages sent by all backward searches is $O(n\sqrt{ n \log n})$.
The forward search initiated by such an insertion traverses the same
edges that the following label-update process traverses. So the
total number of messages of forward searches equals  the total
number of messages required to update the labels which is $O(n\sqrt{
n \log n})$. We conclude that the total number of messages for
backward searches, forward searches and label updates is $O(n\sqrt{
n \log n})$.


Consider now the last insertion if it closes a cycle. The backward
search of this insertion also traverses $O(\sqrt{ n \log n})$
 vertices and sends $O(\sqrt{ n \log n})$ messages. The forward search of
 this insertion traverses each edge at most twice so it sends $O(m)$
 messages. We do not update the labels in this case.

To handle the general case of arbitrary indegrees we slightly change
the labeling as follows. We define a $q$-arc labeling to be an
assignment of ranks to vertices obtained as follows: Each arc
chooses to be ranked with probability $q$, if the arc is ranked then
it chooses a random rank so that with high probability the ranked
vertices have unique ranks and the ranks are small. The rank of a
vertex is the minimum rank of its ranked incoming arcs, if any.

One can modify proofs that depend on the number of ranked
predecessors ({\sl e.g.}, $T_p$) to depend on the number of ranked
predecessor arcs. In Lemma \ref{L-set-size} the number of ranked
predecessors (vertices) goes down by a constant factor. In the
arc-ranked variant of this Lemma, the number of incoming ranked arcs
to predecessor vertices goes
 down by a constant factor. In Lemma \ref{lem:backsize}, rather
 than bound the number of vertices in $D(v_t,v)$ we can bound the
 number of incoming arcs to vertices of $D(v_t,v)$.

 Using these modified Lemma and a $q$-arc labeling with $q\approx
 1/\sqrt{m}$ we get the $m^3/2\sqrt{\log n}$ bound.

{\bf Remark:} If we want arcs to be ranked with probability
$1/\sqrt{m}$, then $m$ has to be known (or at least approximately
known).
 In the distributed setting, this can  be justified using
standard techniques to recompute and distribute $m$ whenever it
doubles. Let $m_{\mathrm{old}}$ be the current estimate of the
number of arcs. If following an insertion of an arc $(u,v)$, vertex
$u$ initiates a recount of the arcs with probability
$1/m_{\mathrm{old}}$ then indeed each vertex would know $m$
approximately up to a factor of $2$. (There are other deterministic
ways to achieve this.)
\end{proof}

When doing a backwards search from $u$, we need only $O(1)$ time per
vertex by  maintaining for each vertex $v$ a list of its immediate
predecessors that have the same label as $v$. This can be maintained
over time by having $v$, whenever it changes its label, store a list
of all immediate predecessors that sent it an update message with
this new label.

In the theorem above we choose $q$ so as to minimize the number of
messages. To minimize time, and assuming that backward searches take
$O(1)$ time per vertex, we choose slightly different $q$ and get the
time bounds stated in the introduction.

We also obtain the following theorem by observing the modified
backward search requires $O((\log n / q)^2)$ time, this follows from
 Corollary~\ref{cor:backsize}.

\begin{thm} \label{thm:33}
Using a $q$- (vertex) labeling with $q= \sqrt[3]{\log n /n}$, and
the modification to the backward search above, the expected running
time of the algorithm is $O(mn^{2/3}(\log n)^{4/3})$.
\end{thm}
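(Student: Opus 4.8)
The plan is to split the total running time into three parts: the cost of the backward searches, the cost of the forward searches, and the cost of maintaining (updating) the labels, and to show that with $q=\sqrt[3]{\log n/n}$ each part is $O(mn^{2/3}(\log n)^{4/3})$ in expectation. Substituting $q=\sqrt[3]{\log n/n}$ gives $\log n/q = n^{1/3}(\log n)^{2/3}$, so $(\log n/q)^2 = n^{2/3}(\log n)^{4/3}$ and $qn = n^{2/3}(\log n)^{1/3}$; these two quantities will govern the backward and the forward/update costs respectively, and the choice of $q$ is exactly what balances them.

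For the backward searches, I would invoke Corollary~\ref{cor:backsize}: with probability at least $1-1/n^{c-4}$ every graph $G_i$ in the insertion sequence satisfies $I(G_i,\ell)\le c\log n/q$, so each backward search explores at most $O(\log n/q)$ vertices, all carrying the same label as the search origin. Using the per-vertex list of same-label immediate predecessors (maintained as described just before the theorem), the search follows only edges internal to this same-label set; since there are $O(\log n/q)$ such vertices, there are $O((\log n/q)^2)$ such edges, and each is processed in $O(1)$ time, giving the stated $O((\log n/q)^2)$ bound per search. Over all $m$ insertions this is $O(m(\log n/q)^2)=O(mn^{2/3}(\log n)^{4/3})$. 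To convert the high-probability bound into an expectation, I would bound a single backward search trivially by $O(n^2)$ (it never leaves the same-label predecessor set, which has at most $n$ vertices), so the complementary event contributes at most $O(mn^2)\cdot n^{-(c-4)}$, which is negligible for $c$ a large enough constant.

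For the forward searches and label updates, I would first bound the expected total number of label changes. Fixing the ranked set $L$ and writing $p_u=|P(u)\cap L|$, Lemma~\ref{lem:changes} gives $E_r[T_{p_u}(r,u)]=O(p_u)$, and since each vertex is ranked independently with probability $q$ we have $E[p_u]=q|P(u)|$; hence the expected total number of updates to $\ell(u)$ over the whole sequence is $O(q|P(u)|)$. Each change of $\ell(u)$ issues one update call along each out-edge of $u$, so the expected number of update messages is at most $\sum_{(u,w)\in E} E[\#\text{changes to }\ell(u)] = O\!\left(q\sum_{(u,w)\in E}|P(u)|\right)=O(qmn)$, using $|P(u)|\le n$. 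As argued in the proof of Theorem~\ref{thm:32}, each forward search traverses exactly the edges touched by the ensuing label update, so the number of forward-search messages is of the same order. Finally, each such message carries a label of length $O(\log n)$ with high probability (Lemma~\ref{L-label-size}) and is processed in $O(\log n)$ time, so the total forward-plus-update time is $O(qmn\log n)=O(mn^{2/3}(\log n)^{4/3})$.

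Adding the three contributions yields the claimed $O(mn^{2/3}(\log n)^{4/3})$ expected running time. The main obstacle I anticipate is the backward-search accounting: the naive count gives only $O(\log n/q)$ visited vertices, and one must argue carefully that tracing the induced same-label subgraph through the maintained predecessor lists costs $O((\log n/q)^2)$ and no more, and that the rare event in which Corollary~\ref{cor:backsize} fails does not spoil the expectation. A secondary point requiring care is the passage from the fixed-$p$ statement of Lemma~\ref{lem:changes} to an expectation over the random ranked set, which I handle by conditioning on $p_u$ and then taking expectation over $L$.
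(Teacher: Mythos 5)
Your proof is correct and follows essentially the same route as the paper: the paper's (very terse) justification of this theorem is precisely that the modified backward search costs $O((\log n/q)^2)$ per insertion by Corollary~\ref{cor:backsize}, while the forward-search and label-update costs are bounded as in the proof of Theorem~\ref{thm:32} via Lemma~\ref{lem:changes} and the $O(\log n)$ per-message processing time, with $q=\sqrt[3]{\log n/n}$ chosen exactly to balance the two contributions at $n^{2/3}(\log n)^{4/3}$ per insertion. The extra details you supply --- the trivial fallback bound used to convert the high-probability statement of Corollary~\ref{cor:backsize} into an expectation, and the conditioning on $p_u$ before averaging over the random ranked set in the application of Lemma~\ref{lem:changes} --- are correct refinements of steps the paper leaves implicit.
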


\section{Using queues to improve forward propagation} \label{nsquare:sec}

Every vertex $w$ maintains  a value $\ell^w(u) \geqlex \ell(u)$ for
all $u$ such that $(w,u)\in E$. The value $\ell^w(u)$ is the label
of $u$ when $w$ last communicated with $u$. {\sl I.e.}, whenever
vertex $w$ sends an update message to $u$, it receives in return the
current label of $u$ and updates $\ell^w(u)$. Note that $u$ may
update $\ell(u)$ following the message from $w$, or not, but in any
case it sends back to $w$ the (possibly new) $\ell(u)$. Every vertex
$w$ maintains a priority queue ordered by $\ell^w(u)$ for all $u$
such that $(w,u)\in E$.

We modify the propagation algorithm as follows: \begin{itemize} \item When vertex $w$ updates its label from $\ell_{{\rm old}}(w)$ to $\ell_{{\rm new}}(w)$,  ($\ell_{{\rm new}}(w)\llex \ell_{{\rm old}}(w)$), $w$ sends a message containing $\ell_{{\rm new}}(w)$ to all vertices $u$ in the priority queue such that $\ell^w(u) \glex \ell_{{\rm new}}(w)$. Such messages from $w$ to $u$ are called {\sl update messages}.
\item When vertex $u$ receives an update message from $w$ with $\ell_{{\rm new}}(w)$, $u$ will
 update it's own label $\ell_{\rm old}(u)$ if $\ell_{{\rm new}}(w)\llex \ell_{old}(u)$ and transmits the (possibly new) $\ell_{\rm new}(u)$ to $w$.
 \item Vertex $w$ sets $\ell^w(u)=\ell_{\rm new}(u)$ and updates the priority queue accordingly.
 \end{itemize}

We apply the algorithm of Section \ref{sec:2directions} with this
modified propagation method and with all vertices ranked. Since when
all vertices are ranked each vertex has a different label the
backward search from $v$ degenerates and contains only $v$. The
total number of messages required by the update procedure to update
the labels is $O(mn)$. The following theorem gives an
 upper bound of $O(n^2 \log n)$ on the total number of messages of
 the modified algorithm which is better for dense graphs.

\begin{thm}
If all vertices are ranked then the modified algorithm described
above sends $O(n^2\log n)$ messages on average. Each message
consists
 of $O(\log^2 n)$ bits with high probability, and it takes $O(\log
 n)$ time to process a message with high probability.
\end{thm}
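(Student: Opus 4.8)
The plan is to separate the messages into those of cycle detection and those of label propagation, and to reduce the former to the latter. Since every vertex is ranked, all labels are pairwise distinct and hence totally ordered by $\prec$, so the same-label backward search is vacuous: I would let cycle detection be carried out entirely by the forward search, which traverses exactly the arcs along which the subsequent label update propagates (as in Section~\ref{sec:2directions}). Consequently cycle detection contributes no messages beyond the propagation messages, except for the single cycle-closing insertion, whose forward sweep visits each arc at most twice and costs $O(m)=O(n^2)$. Everything therefore reduces to bounding, over the whole insertion sequence, the number of update messages produced by the priority-queue propagation, and showing it is $O(n^2\log n)$, improving the $O(mn)$ of the un-queued scheme.

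First I would record the monotonicity that makes the queue work. Because predecessor sets only grow, each $r(\ell(v))$ is lexicographically non-increasing, so every label and every stored key $\ell^w(u)$ moves only in the $\succ$ direction; a message crosses an arc $(w,u)$ precisely when $\ell(w)$ overtakes the stale key $\ell^w(u)$, after which the key is reset to the current $\ell(u)\succ\ell(w)$. I would then split the update messages into those that change the recipient's label and those that do not. The former are charged one-to-one to label changes, and Lemma~\ref{lem:changes} with all vertices ranked (so the number of ranked predecessors of $v$ is $p=|P(v)|$) gives an expected total of $\sum_v O(|P(v)|)=O(n^2)$ label changes, using $\sum_v|P(v)|\le n^2$. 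The latter are the ``catch-up'' messages, and bounding them is the main obstacle: the clean per-arc estimate is only $1+\min\{T(w),T(u)\}$, where $T(v)$ is the number of label changes at $v$, and this already sums to $\Theta(n^3)$ on a transitive tournament; moreover the obvious charging of each catch-up message to an intervening label change of the recipient fails, because many in-neighbours may blame the same change, reproducing the $O(mn)$ bound. The whole point of the priority queue is to suppress exactly this, since $w$ touches only those out-neighbours whose stored key has fallen behind $\ell_{\mathrm{new}}(w)$. I expect the crux to be a global amortized argument showing that the expected number of catch-up messages triggered per label change is only $O(\log n)$ — the logarithm being the expected label length from Lemma~\ref{L-label-size} — so that the catch-up total is $O(n^2\log n)$ as well.

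Finally I would settle the per-message quantities. Every message carries a single label, which by Lemma~\ref{L-label-size} (with $\lambda=n$) has $O(\log n)$ entries with high probability, each an $O(\log n)$-bit identifier, for $O(\log^2 n)$ bits. Processing a message means comparing the carried label against $\ell(u)$ and re-keying the priority queue; a label comparison scans $O(\log n)$ coordinates, so it takes $O(\log n)$ time with high probability. To make these high-probability statements hold simultaneously across all $m\le n^2$ insertions I would invoke Corollary~\ref{cor:backsize} together with the analogous union bound for Lemma~\ref{L-label-size}, which is the source of the ``with high probability'' qualifier in the statement.
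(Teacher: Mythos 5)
There is a genuine gap, and you have in fact flagged it yourself: the entire weight of the theorem rests on bounding the ``catch-up'' (futile) update messages, and your proposal stops at ``I expect the crux to be a global amortized argument showing that the expected number of catch-up messages triggered per label change is only $O(\log n)$.'' That step is never carried out, and the mechanism you conjecture for it is not the right one: the logarithm in the bound does not come from the label length of Lemma~\ref{L-label-size}, and a charging scheme that allocates $O(\log n)$ futile messages per label change is not obviously sound --- a single label change at $w$ can trigger futile messages to many out-neighbors at once, which is exactly the failure mode you identified for the naive chargings. So the easy parts of your write-up (distinct labels make the backward search vacuous, effective messages are $O(n^2)$ in expectation by Lemma~\ref{lem:changes}, message size $O(\log^2 n)$ bits and $O(\log n)$ processing time via Lemma~\ref{L-label-size}) are fine, but the statement's main content is left unproved.

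The paper closes this gap by a different route, which is worth internalizing. It does not separate effective from futile messages at all. Instead it redefines $T_p(r,u)$ to be the maximum number of \emph{all} update messages received by $u$ (futile ones included), over insertion sequences in which $u$ ends with $p$ ranked predecessors \emph{and at most $p$ in-neighbors}, and proves the recurrence
\begin{equation*}
T_p(r,u) \;\leq\; T_{\alpha-1}(r,u) \;+\; T_{p-\beta(\alpha)}(r,u) \;+\; p ,
\end{equation*}
where $u_\alpha$ is the minimum-rank ranked predecessor. The additive $p$ is the key new idea: each in-neighbor $v$ of $u$ sends at most \emph{one} message at the moment its stored key $\ell^v(u)$ first comes to start with $u_\alpha$ (this is precisely what the priority queue guarantees), and there are at most $p$ in-neighbors by the definition of $\sigma^p(u)$; all remaining messages after that moment are absorbed into $T_{p-\beta(\alpha)}(r,u)$ via the same primed-vertex graph transformation as in Lemma~\ref{lem:mapping}, with an extra argument that the transplanted schedules remain valid for the queue-based propagation rule. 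Since the position $\alpha$ of the minimum-rank predecessor is uniform over $1,\ldots,p$, this is a quicksort-type recurrence whose expected solution is $O(p\log p)$; summing over vertices with $p\leq n$ gives $O(n^2\log n)$. In short: the logarithm is a divide-and-conquer artifact of the random minimum's position, not a label-length phenomenon, and your proposal is missing both this recurrence and the one-crossing-per-in-neighbor observation that makes its additive term linear rather than quadratic.
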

\begin{proof}
Although we run the algorithm with all vertices ranked, our analysis
is more general and bounds the number of update messages send to
each vertex $u$ as a function of the number of ranked predecessors
$u$ has at the end.

We use the notations and definitions of Section \ref{sec:analysis}
with the following modifications. We define $\sigma^p(u)$ to be the
set of insertion sequences such that $u$ ends up with $p$ ranked
predecessors and with at most $p$ incoming neighbors. We define
 $T_p(r,u)$ to be
the maximal number of update messages sent  to $u$ (we count both
those that trigger a change of $\ell(u)$ and futile ones that do
not), for any sequence of arc insertions $\sigma \in \sigma^p(u)$,
and any schedule of updates for each of these insertions.

We prove that  $T_p(r,u)$ satisfies the recurrence
$$T_p(r,u) \leq T_{\alpha-1}(r,u) + T_{p-\beta(\alpha)}(r,u) + p.$$
The solution to this recurrence is $O(p\log p)$ so by summing up
over all vertices and substituting the worst case $p=n$ the theorem
follows.

Fix some $u$, and consider the worst case $\sigma\in \sigma^p(u)$:
that is $\sigma$, in conjunction with appropriate schedules,
maximizes the number of update messages to $u$. Let
 $u_1,u_2,\ldots,u_p$ be the final ranked predecessors of $u$ in the order
 in which they became predecessors of $u$ (as in Section \ref{sec:analysis}).
  Let $u_\alpha$ be the ranked predecessor of $u$ of minimal rank.
Let $\gamma$ be such that $u_{\alpha} \in S(e_\gamma)$.
  We split the update messages sent to $u$ into three chronologically consecutive groups:
  \begin{itemize}
    \item Updates from the schedules of the arcs $e_1, e_2, \ldots, e_{\gamma-1}$. Let $q=\sum_{k=1}^{\gamma-1} |S(e_k)|$, there are no more than $T_q(r,u) \leq T_{\alpha-1}(r,u)$
    update messages sent to $u$ associated with these insertions.

    \item Each in-neighbor $v$ of $u$ may send  a single update
    message to $u$ such that before sending this message $\ell^v(u)$
    did not start with $u_\alpha$ and after sending this message $\ell^v(u)$
starts with
    $u_\alpha$. Since the number of in-neighbors
    is at most $p$ for every $\sigma \in \sigma^p(u)$ we get that there are
    at most $p$ messages sent to $u$ of this kind.

    \item \label{split:hard2} All other messages sent to $u$. These are the messages from the schedules of
    $e_\gamma, e_{\gamma+1},\dots$ that were not counted in the previous item. Let the number of such messages be denoted by $Z$. As in the proof of Lemma \ref{lem:mapping} we now show that
    $Z\leq T_{p-\beta(\alpha)}(r,u)$.
  \end{itemize}


As in the proof of Lemma \ref{lem:mapping} we consider a new graph
consisting of $|V|+|L|$ vertices (recall that $L$ is the set of
ranked vertices), and initially containing no arcs. For every ranked
vertex $v\in L$ we add a new unranked vertex $v'$, the rank of $v$
remains unchanged. We consider the sequence of arc insertions,
$\tau$,  defined in the proof of Lemma \ref{lem:mapping}. At the end
of this sequence $u$ has at most $p-\beta(\alpha)$ ranked
predecessors and at most $p-\beta(\alpha)$ incoming arcs. We claim
that the schedules for the insertions of $(u_j,u'_j)$ defined in the
proof of Lemma \ref{lem:mapping} are valid schedules. This
immediately implies that the insertions of these arcs generate as
many update messages to $u$ as were generated by the insertion of
$e_\gamma$.

Consider the schedule of $(u_j,u'_j)$. The first message in this
schedule is from $u_j$ to $u'_j$ and it updates the label of $u'_j$
to contain $u_j$ (i.e.\ to be the same as the label of $u_j$).
 To prove that the rest of this schedule  is
valid  we need to argue that if a message with a label $\ell$ that
starts with ``$u_\alpha,u_j$'' was sent from $x$ to $y$ by the
schedule of $e_\gamma$, then we can send message with the same label
with $u_\alpha$ removed from $x'$ to $y$. Each time a vertex $y$ is
updated it sends a message to $y'$ so that they always have the same
label (it is obvious that these messages can be sent since they
cause real updates).

So consider such a message $\ell$ that was sent from $x$ to $y$ by
the schedule of $e_\gamma$. Let $\ell'$ be the corresponding message
with $u_\alpha$ removed that is to be sent by the schedule of
$(u_j,u'_j)$.  In the schedule of $(u_j,u'_j)$ if
 $\ell^{x'}(y)$ does not start with $u_j$ then it
must start with a vertex with rank greater than the rank of $u_j$
(vertices of smaller rank still cannot reach any other vertex) which
implies that $\ell'$ is lexicographically smaller than
$\ell^{x'}(y)$ and therefore $\ell'$ can be sent. Otherwise, $x'$
has already sent to $y$ a message earlier in this schedule. In this
case $\ell'$ must be lexicographically smaller than $\ell^{x'}(y)$
because this schedule is a subsequence of the schedule of
$e_\gamma$.
\end{proof}

\bibliographystyle{plain}
\bibliography{bibcycle}
\end{document}